\colorlet{lblue}{white!25!blue}
\colorlet{grey}{white!50!black}
\colorlet{dblue}{black!50!lblue!70!grey}
\newtheorem{theorem}{Theorem}
\newtheorem{lemma}{Lemma}
\newtheorem{corollary}{Corollary}
\newtheorem{definition}{Definition}
\newtheorem{rrule}{Rule}
\newtheorem{brule}{Branching Rule}
\newcommand{\Oh}{\mathcal{O}}
\newcommand{\LGM}{\textsc{List-Colored Graph Motif}\xspace}
\newcommand{\sLGM}{\textsc{LGM}\xspace}
\newcommand{\GM}{\textsc{Graph Motif}\xspace}
\newcommand{\sGM}{\textsc{GM}\xspace}
\newcommand{\CGM}{\textsc{Colorful Graph Motif}\xspace}
\newcommand{\sCGM}{\textsc{CGM}\xspace}
\newcommand{\num}{\occ}
\newcommand{\MCC}{\textsc{Multicolored Clique}\xspace}
\newcommand{\halfskip}{\ensuremath{\mskip0.5\thinmuskip{}}}
\renewcommand{\L}{\mathcal{L}\xspace}
\DeclareMathOperator{\occ}{occ}
\newcommand{\mult}{M}
\title{Graph Motif Problems Parameterized by Dual\footnote{A
    preliminary version of this work appeared in \emph{Proceedings  of
      the 27th Annual Symposium on Combinatorial Pattern Matching
      ({CPM}~'16)}, volume 54 of LIPIcs, pages 7:1--7:12. This version
    contains all missing proofs, an improved running time for
    Theorem~\ref{thm:gm-tree} and a new tractability result
    (Theorem~\ref{thm:lgm-tree-H-bounded-mult}). CK was partially
    supported by the DFG, project ``Multivariate 
    algorithmics for graph and string problems in bioinformatics'' (KO 3669/4-1).}}
\author[1]{Guillaume Fertin}
\affil[1]{Laboratoire des Sciences du Numérique de Nantes,
  UMR CNRS 6004, Universit\'e de Nantes, 2 rue de la Houssini\`ere, 44322
  Nantes Cedex 3, France\\ \href{mailto:guillaume.fertin@univ-nantes.fr}{guillaume.fertin@univ-nantes.fr}
}
\author[2]{Christian Komusiewicz}
\affil[2]{Fachbereich Mathematik und Informatik, Philipps-Universität Marburg,
  Germany\\\href{mailto:komusiewicz@informatik.uni-marburg.de}{komusiewicz@informatik.uni-marburg.de}} 
\begin{document}

\maketitle
\begin{abstract}
  Let $G=(V,E)$ be a vertex-colored graph, where $C$ is the set of
  colors used to color $V$. The \GM (or \sGM)
  problem takes as input $G$, a multiset $M$ of colors built from $C$,
  and asks whether there is a subset $S\subseteq V$ such that
  (i)~$G[S]$ is connected and (ii)~the multiset of colors obtained
  from $S$ equals $M$. The \CGM (or \sCGM) problem is the
  special case of \sGM in which $M$ is a set, and the
  \LGM (or \sLGM) problem is the extension of \sGM in which each vertex
  $v$ of $V$ may choose its color from a list $\L(v)\subseteq C$ of colors.

  We study the three problems \sGM, \sCGM, and \sLGM,
  parameterized by the dual parameter~$\ell:=|V|-|M|$. For general graphs,
  we show that, assuming the strong exponential time hypothesis, \sCGM
  has no $(2-\epsilon)^\ell\cdot |V|^{\Oh(1)}$-time algorithm, which implies
  that a previous algorithm, running in $\Oh(2^\ell\cdot |E|)$~time is optimal~[Betzler et al., IEEE/ACM TCBB 2011]. We also prove
  that \sLGM is W[1]-hard with respect to~$\ell$ even if we restrict ourselves to lists of at
  most two colors. If we constrain the input graph to be a tree, then we
  show that \sGM can be solved in $\Oh(3^\ell\cdot |V|)$ 
  time but admits no polynomial-size problem
  kernel, while \sCGM can be solved in $\Oh(\sqrt{2}^{\halfskip\ell} + |V|)$
  time and admits a polynomial-size problem kernel.
\end{abstract}

\section{Introduction}

%\subparagraph*{Context.} 
The \textsc{Subgraph Isomorphism} problem is the following pattern matching problem in graphs: given a (typically large)
host graph $G$ and a (small) query graph $H$, return one (or all) occurrence(s) of $H$ in $G$,
where the term occurrence denotes here a subset $S$ of $V(G)$ such that
$G[S]$, the subgraph of $G$ induced by $S$, is isomorphic to $H$. This
type of graph mining problem has different applications, notably in
biology~\cite{SI06}. %where more and more graphs, of increasing sizes, become available.
\textsc{Subgraph Isomorphism} is a {\em structural} graph pattern matching problem, where one
looks for similar graph structures between $H$ and $G$. In some biological contexts, however,
additional information is provided to the vertices of the graphs, for example their biological
function. This can be modeled by labeling each vertex of the graph, for example by giving it
one or several colors, each corresponding to an identified function. In the presence of such
functional annotation, the structure of a given induced subgraph may be of less importance than
the functions it corresponds to. Thus, a new set of {\em functional} graph pattern matching
problems has emerged, starting with the \GM problem~\cite{LFS06}, which was introduced in the
context of the analysis of metabolic networks. In~\textsc{Graph Motif}, the query is a
multiset $M$ of colors that represents the functions of interest, and we search for an
occurrence of~$M$ in the host graph, where the previous demand of being isomorphic to the query is
replaced by a connectivity demand.
%  demanding that the occurrence is connected. More
% precisely, \GM is defined as follows.

\begin{quote}
  \textsc{Graph Motif} (\textsc{GM})\\
  \textbf{Input:} A multiset $M$ built on a set $C$ of colors, an
  undirected graph~$G=(V,E)$, and a coloring $\chi: V\to C$.\\
  \textbf{Question:} Is there a set~$S\subseteq V$ such
  that~$G[S]$ is connected and there is a one-to-one mapping~$f:S\to M$ such that~$f(v)=\chi(v)$ for all~$v\in S$?
\end{quote}

Many variants of \sGM have been introduced and studied. In
particular, \LGM (or \sLGM) is a generalization of \sGM that is used
to identify, in a given protein interaction network, protein complexes that are
similar to a given protein complex from a different
species~\cite{BHKSS10}. In \sLGM, the graph~$G$ is associated with a list-coloring~$\L: V\to 2^C$, that is, each vertex~$v$ is associated with a set~$\L(v)$ of colors, and the question is whether there is a set~$S\subseteq V$ such that (i)~$G[S]$ is connected and
(ii)~the one-to-one mapping~$f$ from~$S$ to~$M$ we look for satisfies
$\forall v\in S: f(v) \in \L(v)$. The special case of \textsc{GM} in
which $M$ is a set is called \textsc{Colorful Graph Motif} (or
\textsc{CGM}). Many optimization problems related to \sGM have
received interest, including some that are related to tandem mass
spectrometry and where the input graph is directed and edge-weighted~\cite{RRNB13}. All
these problem variants have given rise to a very abundant
literature. \sCGM, \sGM, and \sLGM are NP-hard even in very restricted
cases~\cite{LFS06,FFHV11,BS15}. Consequently, many of the above-mentioned studies
have focused on (dis)proving fixed-parameter tractability of the
problems (see e.g.~\cite{sikora-survey12} for an informal survey
on the topic). In such cases, very often the parameter $k:=|M|=|S|$ is
considered.

In this paper, we study the parameterized complexity of \sGM, \sCGM,
and \sLGM, but we differ from the usual viewpoint by focusing on the
% much less-studied
{\em dual} parameter $\ell:=|V|-|S|$, that is,
$\ell$ is the number of vertices to be deleted from $G$ to obtain a
solution. Although the choice of $\ell$ may be disputable because {\em a priori} it
may be too large to expect a good behavior in practice,
there are several arguments for choosing such a parameter: First,
after some initial data reduction, the input may be divided into
smaller connected components, where~$\ell$ is not much larger
than~$k$. Second, the algorithms for parameter~$k$ rely on algebraic
techniques or dynamic programming, and in both cases, the worst-case
running time is equivalent to the actual running time. In contrast,
for example for \sCGM, the algorithm for parameter~$\ell$ is a search
tree algorithm~\cite{BBFKN11}, and search tree algorithms can be
substantially accelerated via pruning rules. Finally, there are
subgraph mining problems where the dual parameter~$\ell$ is usually
bigger than the parameter~$k$ but leads to the current-best algorithm
(in terms of performance on real-world instances), see e.g.~\cite{HKN15}. Hence,
parameterization by~$\ell$ may be useful even if~$\ell$ is bigger
than~$k$, and thus deserves to be studied.

\subparagraph*{Related work and our contribution.}  \sGM is NP-hard,
even when $M$ is composed of two colors~\cite{FFHV11}. Concerning the
parameterized complexity for parameter $k:=|M|$, the current-best
randomized algorithm has a running time of $2^k\cdot
n^{\Oh(1)}$~\cite{BKK13,PZ14} where $n:=|V|$, and there is evidence
that this cannot be improved to a running time of $(2-\epsilon)^k\cdot
n^{\Oh(1)}$~\cite{BKK13}. The current-best running time for a
deterministic algorithm is~$5.22^k\cdot n^{\Oh(1)}$~\cite{PSZ16}.  
\sGM on trees can be solved in~$n^{\Oh(c)}$ time where~$c$ is the number of
colors in $M$~\cite{FFHV11}, but is W[1]-hard with respect
to~$c$~\cite{FFHV11}. Other parameters, essentially related to the
structure of the input graph $G$, have been studied by
Ganian~\cite{Gan11}, Bonnet and Sikora~\cite{BS15}, and Das et
al.~\cite{DEMR17}. For example, \GM is fixed-parameter tractable when
parameterized by the size of a vertex cover of the input
graph~\cite{Gan11,BS15}.  
Finally, concerning parameter~$\ell$,~\sGM has been shown to be
W[1]-hard, even when $M$ is composed of two colors~\cite{BBFKN11}.  

Since \sCGM is a special case of \sGM, any above-mentioned
positive result for \sGM also holds for \sCGM.  In addition, \sCGM is
NP-hard even for trees of maximum degree~3~\cite{FFHV11}, and does
not admit a polynomial-size problem kernel with respect to~$k$ even
if~$G$ has diameter two or if~$G$ is a comb graph (a special type of
tree with maximum degree 3)~\cite{ABH+10}. Finally, %  concerning
% parameter $\ell$,
\sCGM can be solved in $\Oh(2^\ell\cdot m)$
time~\cite{BBFKN11}, where $m:=|E|$.  The \sLGM problem is an extension of~\sGM{} and
thus any negative result for \sGM propagates to
\sLGM. Moreover,~\sLGM{} is  fixed-parameter tractable with
respect to~$k$, and the current-best algorithm runs in $2^k\cdot n^{\Oh(1)}$
time~\cite{PZ14}. Concerning parameter $\ell$, \sLGM has been shown to
be W[1]-hard even when $M$ is a set~\cite{BBFKN11}.

As mentioned above, we study \sGM, \sLGM and \sCGM with respect to the
dual parameter~$\ell:= n-k$. Since many results in general graphs turn
out to be negative, we also study the special case where
the input graph $G$ is a tree. Our results are summarized in
Table~\ref{tab:results}. In a nutshell, we strengthen previous
hardness results for the general case and show that the $\Oh(2^\ell\cdot
m)$-time algorithm for \sCGM{} is essentially optimal. Then, we show
that for~\sGM{} on trees and for some special cases of~\sLGM{} on
trees, a fixed-parameter algorithm can be achieved. Finally, we show
that for~\sCGM{} on trees, a polynomial-size
problem kernel and better
running times than for general graphs can be achieved.

\begin{table}[t]
  \caption{Overview of new and previous results with respect to the dual
    parameter~$\ell:=n-k$, where~$n:=|V|$,~$k:=|M|$,~$m:=|E|$ and $\Delta:=\max_{v\in V} |\L(v)|$ denotes the
    maximum list size in~$G$. The lower bound result for~\sCGM assumes the
    strong exponential time hypothesis (SETH)~\cite{IPZ01}. }  \centering 
\begin{tabular}[t]{r c c}\toprule  & General graphs & Trees \\ \midrule
  LGM~~ & W[1]-hard~\cite{BBFKN11} & ? \\
  LGM,~$\Delta=2$ & W[1]-hard (Cor.~\ref{cor:lb-small-list}) & ? \\
  \midrule
  GM~~  & W[1]-hard~\cite{BBFKN11} & $\Oh(3^{\ell}\cdot
  n)$ (Thm.~\ref{thm:gm-tree})\\ & & no poly.~kernel~(Thm.~\ref{thm:gm-tree-no-kernel}) \\ \midrule
  CGM~~  & $\Oh(2^{\ell}\cdot m)$~\cite{BBFKN11}, & $\Oh(\sqrt{2}^{\halfskip\ell} + n)$~(Thm.~\ref{thm:cgm-tree}), \\
  & no~$(2-\epsilon)^{\ell}\cdot n^{\Oh(1)}$~(Thm.~\ref{thm:lb-dual}) &  \\
  & no poly.~kernel~(Thm.~\ref{thm:lb-kernel}) & $(2\ell+1)$-vertex kernel~(Thm.~\ref{thm:cgm-tree-kernel}) \\ \bottomrule
\end{tabular}
\label{tab:results}
\end{table}

\subparagraph*{Preliminaries.}  For an integer~$n$, we use~$[n]:=\{1,\ldots,n\}$ to denote the
set of the integers from~$1$ through~$n$.  Throughout the paper, the input graph for our three
problems is $G=(V,E)$, and we let~$n:=|V|$ (resp. $m:=|E|$) denote its number of vertices
(resp. edges). For a vertex set~$S\subseteq V$, we let~$G[S]:=(S,\{\{u,v\}\mid u,v\in S\})$ denote the subgraph induced by~$S$. 
The set~$S$ of vertices sought for in the three problems is called an
\emph{occurrence} of~$M$.  If~$G$ is vertex-colored, we call a vertex set~$S$ \emph{colorful}
if~$|S|=|M|$ and all vertices in~$S$ have pairwise different colors. A vertex~$v$ is called
\emph{unique} if it is assigned a color $c$ that is assigned to no other vertex in $V$. For a
multiset~$M$ and an element~$c$ of~$M$, we use~$M(c)$ to denote the multiplicity of~$c$ in~$M$.

To analyze the structure of the coloring constraints for instances of~\sLGM, we consider the following auxiliary graph.
\begin{definition}
  Let~$(M,G,\L)$ be an instance of~\sLGM. The vertex-color graph~$H$
  of~$(M,G,\L)$ is the bipartite graph with vertex set~$V\cup C$ and
  edge set~$\{\{v,c\}\mid v\in V, c\in \L(v) \}$.
\end{definition}
Observe that \sGM~instances are~\sLGM~instances where in the vertex-color graph~$H$ each vertex
from~$V$ has degree one. In other words, $H$ is a disjoint union of stars whose non-leaf is a
vertex from~$C$.  Moreover, an~\sLGM~instance where~$H$ is a disjoint union of bicliques can be
easily replaced by an equivalent~\sGM~instance: For each biclique~$K$ in~$H$, replace the color
set~$K\cap C$ by one color with multiplicity~$\sum_{c\in C} M(c)$ in~$M$ and assign this color
to all vertices in~$K\cap V$.

We briefly recall the relevant notions of parameterized algorithmics~\cite{CFK+15,DF13}.
Parameterized algorithmics aims at analyzing the impact of structural input properties on the
difficulty of computational problems. Formally, a parameterized problem~$L$ is a subset
of~$\Sigma^*\times \mathbb{N}$ where the first component is the input instance and the second component is the \emph{parameter}. A parameterized problem~$L$ is fixed-parameter tractable if every input instance~$(I,k)$ can be solved in~$f(k)\cdot |I|^{\Oh(1)}$ time where~$f$ is a computable function depending only on~$k$.
A \emph{reduction to a problem kernel}, or \emph{kernelization},
is an algorithm that takes as input an instance~$(I,k)$ of a parameterized problem and produces
in polynomial time an instance~$(I',k')$ such that
\begin{itemize}
\item $(I,k)$ is a yes-instance if and only if~$(I',k')$ is a yes-instance and
\item $|I'|\le g(k)$ where~$g$ is a computable function depending only on $k$.
\end{itemize}
The instance~$(I',k')$ is called \emph{problem
  kernel} and~$g$ is called the \emph{size of the problem kernel}. If~$g$ is a polynomial
function, then the problem admits a \emph{polynomial-size problem kernelization}.  The class
W[1] is a basic class of presumed fixed-parameter intractability~\cite{CFK+15,DF13}, that is, if a
problem is W[1]-hard for parameter~$k$, then we assume that it cannot be solved
in~$f(k)\cdot n^{\Oh(1)}$ time~\cite{CFK+15,DF13}. The strong exponential time hypothesis (SETH)
assumes that, for any~$\epsilon>0$, CNF-SAT cannot be solved in
time~$(2-\epsilon)^n\cdot |\Phi|^{\Oh(1)}$ where~$\Phi$ is the input formula and~$n$ is the
number of variables~\cite{IPZ01}.% ; a
% survey by Lokshtanov et al.~\cite{LMS11} describes applications of the
% (strong) exponential time hypothesis.

This work is structured as follows. In Section~\ref{sec:lb-dual}, we present
 lower bounds for~\sLGM{} and~\sCGM{} on general
graphs. These negative results motivate our study of the case when~$G$ is a tree; our results for~\sGM{} on trees and~\sCGM{} on
trees will be presented in~Section~\ref{sec:gm} and~Section~\ref{sec:cgm},
respectively.  We conclude with an outlook of future work
 in~Section~\ref{sec:conclusion}.

\section{Parameterization by Dual in General Graphs: Tight Lower Bounds}
\label{sec:lb-dual}
\sCGM{} can be solved in~$\Oh(2^\ell\cdot m)$ time~\cite{BBFKN11}. We
show that this running time bound is essentially optimal.
\begin{theorem}
\label{thm:lb-dual}
\CGM cannot be solved in~$(2-\epsilon)^\ell \cdot n^{\Oh(1)}$ time
unless the strong exponential time hypothesis (SETH) fails.
\end{theorem}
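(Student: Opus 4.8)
The plan is to reduce CNF-SAT to \sCGM in polynomial time so that a formula~$\Phi$ on~$n$ variables is mapped to an instance whose dual parameter~$\ell$ equals exactly~$n$; then a hypothetical~$(2-\epsilon)^\ell\cdot n^{\Oh(1)}$-time algorithm for~\sCGM would decide CNF-SAT in~$(2-\epsilon)^n\cdot|\Phi|^{\Oh(1)}$ time, contradicting SETH. The guiding observation is that in~\sCGM the colorfulness demand is local to the color classes, whereas connectivity is the only global constraint; the reduction should therefore (i)~let a truth assignment be recorded by the choice made inside size-two color classes and (ii)~make connectivity enforce clause satisfaction.

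Concretely, given~$\Phi$ with variables~$x_1,\dots,x_n$ and clauses~$C_1,\dots,C_m$ (we may assume that no clause is empty, since an empty clause makes~$\Phi$ unsatisfiable), I would construct~$G$, together with its coloring~$\chi$, as follows. Add a \emph{hub} vertex~$r$ carrying a private color~$c_r$. For each variable~$x_i$, add a color~$c_i$ and two vertices~$v_i^{T}$ and~$v_i^{F}$, both colored~$c_i$ and both adjacent to~$r$; putting~$v_i^{T}$ into the occurrence will encode setting~$x_i$ to true, and putting~$v_i^{F}$ into the occurrence will encode setting~$x_i$ to false. For each clause~$C_j$, add a vertex~$w_j$ carrying a private color~$d_j$, and add the edge~$\{w_j,v_i^{T}\}$ whenever~$x_i$ occurs positively in~$C_j$ and the edge~$\{w_j,v_i^{F}\}$ whenever~$\neg x_i$ occurs in~$C_j$. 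Finally, set~$M:=\{c_r\}\cup\{c_i\mid i\in[n]\}\cup\{d_j\mid j\in[m]\}$, which is a set, as required by~\sCGM. Since~$|V|=1+2n+m$ and~$|M|=1+n+m$, we have~$\ell=|V|-|M|=n$.

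For correctness, observe that~$c_r$ and each~$d_j$ are private colors occurring in~$M$, so every occurrence~$S$ must contain~$r$ and all of~$w_1,\dots,w_m$; moreover, $c_i\in M$ has exactly two vertices of color~$c_i$, so exactly one of~$v_i^{T},v_i^{F}$ lies in~$S$, which defines a truth assignment~$\beta$. Every chosen variable vertex is adjacent to~$r$, hence~$G[S]$ is disconnected only if some~$w_j$ has no neighbor in~$S$; but the neighbors of~$w_j$ are exactly the literal vertices of~$C_j$, so connectivity of~$G[S]$ forces at least one literal of~$C_j$ to be satisfied by~$\beta$, and thus~$\beta$ satisfies~$\Phi$. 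Conversely, given a satisfying assignment~$\beta$, the set~$S:=\{r\}\cup\{v_i^{\beta(x_i)}\mid i\in[n]\}\cup\{w_j\mid j\in[m]\}$ has exactly~$|M|$ vertices, their colors are pairwise distinct and form~$M$, every~$v_i^{\beta(x_i)}$ is adjacent to~$r$, and every~$w_j$ is adjacent to a chosen vertex witnessing a satisfied literal of~$C_j$, so~$G[S]$ is connected. Hence~$(M,G,\chi)$ is a yes-instance of~\sCGM if and only if~$\Phi$ is satisfiable, the reduction runs in polynomial time, and it produces an instance of size~$\poly(|\Phi|)$ with~$\ell=n$, which gives the theorem. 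The main point requiring care is the accounting that pins~$\ell$ down to exactly~$n$: every gadget vertex other than the two per variable must carry a private color so that it is forced into every occurrence and does not ``spend'' any of~$\ell$; together with the verification that connectivity neither over- nor under-constrains the clauses, the remaining details are routine.
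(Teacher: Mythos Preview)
Your proposal is correct and is essentially the same reduction as the paper's: a hub vertex with a unique color adjacent to two literal vertices per variable (sharing one color per variable), plus one uniquely-colored clause vertex adjacent to its literal vertices, yielding~$\ell$ equal to the number of variables. The correctness argument and the running-time transfer are also the same.
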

\begin{proof}
  We present a polynomial-time reduction from CNF-SAT:
  \begin{quote}
    \textbf{Input:} A boolean formula~$\Phi$ in conjunctive normal
    form with clauses~${\cal C}_1,\ldots , {\cal C}_q$ over variable
    set~$X=\{x_1,\ldots , x_r\}$.\\
    \textbf{Question:} Is there an assignment~$\beta: X\to \{\texttt{true},\texttt{false}\}$ that
    satisfies~$\Phi$?
  \end{quote}
  The reduction works as follows. First, for each variable~$x_i\in X$
  introduce two \emph{variable vertices}~$v^t_i$ and~$v^f_i$ and color
  each of the two vertices with color~$\chi^X_i$. The idea is that
  (with the final occurrence) we must select exactly one vertex for
  this color. This selection will correspond to a truth assignment
  to~$X$. Now, introduce for each clause~${\cal C}_i$ a \emph{clause
    vertex}~$u_i$, color~$u_i$ with a unique color~$\chi^{\cal C}_i$
  and make~$u_i$ adjacent to vertex~$v^t_j$ if~$x_j$ occurs nonnegated
  in~${\cal C}_i$ and to vertex~$v^f_j$ if~$x_j$ occurs negated
  in~${\cal C}_i$. Finally, introduce one further vertex~$v^*$ with a
  unique color~$\chi^*$, make~$v^*$ adjacent to all variable vertices
  and let~$M$ be the set containing each of the introduced colors
  exactly once. Note that there are exactly~$|X|$ colors that appear
  twice in~$G$ and that all other colors appear exactly
  once. Hence,~$\ell=|X|$. We next show the correctness of the
  reduction. Let~$I=(M,G,\chi)$ denote the constructed instance of~\sCGM.
  
  First, assume that~$\Phi$ is satisfiable and let~$\beta$ be a
  satisfying assignment of~$X$. For the \sCGM instance~$I$ consider the
  vertex set~$S\subseteq V$ that contains all clause vertices,
  vertex~$v^*$, and for each variable~$x_i$ the vertex~$v^t_i$ if~$\beta$ 
  sets~$x_i$ to \texttt{true} and the vertex~$v^f_i$ otherwise. Clearly,~$|S|=|M|$ and
  no two vertices of~$S$ have the same color. To show that~$I$ is a
  yes-instance of \sCGM it remains to show that~$G[S]$ is connected.
  First, the subgraph induced by the variable vertices in $S$ plus~$v^*$ is a
  star and thus it is connected. Second, since~$\beta$ is a satisfying
  assignment each clause vertex in~$S$ has at least one neighbor
  in~$S$ (which is by construction a variable vertex). Hence,~$G[S]$
  is connected.

  Conversely, assume that~$I$ is a yes-instance of~\sCGM, and let~$S$ be a
  colorful vertex set with~$|S|=|M|$ such that~$G[S]$ is
  connected. Since~$S$ is colorful, the variable vertices in~$S$
  correspond to a truth assignment of~$X$. This assignment
  satisfies~$X$: Indeed, since $G[S]$ is connected, there is a path in~$G[S]$ between each clause vertex~$u_i$ and~$v^*$, 
  and thus there is a neighbor of~$u_i$ that is in~$S$. If this
  neighbor is~$v^t_j$ (resp.~$v^f_j$), then by construction,~$\beta$
  assigns \texttt{true} (resp. \texttt{false}) to~$x_j$ and thus~${\cal  C}_i$ is satisfied.

  Thus, the two instances are equivalent. Now observe that
  since~$\ell=|X|=r$ and $n=2r+q+1$, any~$(2-\epsilon)^\ell\cdot n^{\Oh(1)}$-time
  algorithm implies a~$(2-\epsilon)^r\cdot (r+q)^{\Oh(1)}$-time algorithm
  for CNF-SAT. This directly contradicts the SETH.
\end{proof}

The above reduction also makes the existence of a polynomial-size
problem kernel for parameter~$\ell$ unlikely. This is implied by
the following two facts. First, CNF-SAT parameterized by the number of
variables does not admit a polynomial-size problem kernel unless
$\textrm{NP}\subseteq \textrm{coNP/poly}$~\cite{DM10}. Second, the
reduction presented in the proof of Theorem~\ref{thm:lb-dual} is a
\emph{polynomial parameter transformation}~\cite{BTY11} from CNF-SAT
parameterized by the number of variables to \sCGM parameterized
by~$\ell$. More precisely, given an input CNF-SAT formula $\Phi$ on
variable set~$X$, the reduction produces an instance $I=(M,G,\chi)$ of~\sCGM{}
with~$\ell=|X|$. Now, any polynomial-size problem kernelization
applied to $I$ %this instance
produces in polynomial time an equivalent~\sCGM{} instance $I'$ of
size~$\ell^{\Oh(1)}=|X|^{\Oh(1)}$. Since~CNF-SAT is NP-hard, we can now
transform this~\sCGM{} instance in polynomial time into an equivalent
CNF-SAT instance that has size~$\ell^{\Oh(1)}=|X|^{\Oh(1)}$.
Hence, a polynomial-size problem kernel for \sCGM{} parameterized by~$\ell$ implies a
polynomial-size problem kernel for CNF-SAT parameterized
by~$|X|$. This implies~$\textrm{NP}\subseteq
\textrm{coNP/poly}$~\cite{DM10} (which in turn implies a collapse of the
polynomial hierarchy).
\begin{theorem}
  \label{thm:lb-kernel}
  \CGM parameterized by~$\ell$ does not admit a polynomial-size
  problem kernel unless~$\text{\rm NP}\subseteq \text{\rm {coNP/poly}}$.
\end{theorem}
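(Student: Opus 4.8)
The plan is to observe that the text preceding the statement of Theorem~\ref{thm:lb-kernel} already contains essentially the entire argument, and the proof itself needs only to assemble those observations in the correct logical order using the machinery of polynomial parameter transformations (PPTs). Concretely, I would first recall the definition of a PPT from CNF-SAT parameterized by the number of variables~$|X|$ to \sCGM parameterized by~$\ell$: a polynomial-time computable map sending an instance $(\Phi, |X|)$ to an equivalent instance $(I, \ell)$ with $\ell \le p(|X|)$ for some polynomial~$p$. Then I would simply point to the reduction constructed in the proof of Theorem~\ref{thm:lb-dual} and verify it meets this definition: it runs in polynomial time, it is correctness-preserving (the two instances are equivalent, as shown there), and crucially it produces an instance with $\ell = |X|$, so the parameter bound $\ell \le p(|X|)$ holds with $p$ the identity. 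This last point is the one genuinely new thing to check, and it is immediate from the construction.

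Next I would invoke the standard transfer lemma for PPTs (see~\cite{BTY11}): if problem~$A$ parameterized by~$\kappa_A$ admits a PPT to problem~$B$ parameterized by~$\kappa_B$, and $B$ admits a polynomial-size kernel with respect to~$\kappa_B$, then $A$ admits a polynomial-size kernel with respect to~$\kappa_A$ — provided $A$ (or rather its unparameterized version) is in NP and $B$ is NP-hard, so that one can compose the kernel for~$B$ with a polynomial-time many-one reduction from (decision) $B$ back to (decision) $A$. Here $A$ is CNF-SAT, which is NP-hard, and $B$ is \sCGM, which is in NP (a solution~$S$ is a polynomial-size certificate). So the composition goes: given $(\Phi,|X|)$, apply the PPT to get an \sCGM instance with $\ell = |X|$; apply the hypothetical polynomial kernel for \sCGM$[\ell]$ to shrink it to size $\ell^{\Oh(1)} = |X|^{\Oh(1)}$; then, since CNF-SAT is NP-hard, apply a fixed polynomial-time reduction from \sCGM to CNF-SAT to obtain a CNF-SAT instance of size $|X|^{\Oh(1)}$, which — together with the trivial bound that its variable count is at most its total size — is a polynomial-size kernel for CNF-SAT parameterized by the number of variables.

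Finally I would close the chain with the known lower bound: CNF-SAT parameterized by the number of variables admits no polynomial-size kernel unless $\mathrm{NP} \subseteq \mathrm{coNP/poly}$~\cite{DM10} (which would collapse the polynomial hierarchy). Combining this with the implication just established yields that \sCGM parameterized by~$\ell$ admits no polynomial-size kernel under the same assumption, which is exactly the claimed statement. I do not anticipate a real obstacle: every ingredient is either already spelled out in the paragraph before the theorem or is a black-box citation. The only point requiring a sentence of care is making explicit that the reduction from Theorem~\ref{thm:lb-dual} is genuinely a PPT — i.e., that the parameter does not blow up — and that one additionally needs NP-hardness of CNF-SAT (to reduce back) and membership of \sCGM in NP (to have a certificate), both of which are standard. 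The write-up is therefore short: state the two facts, note the reduction is a PPT, cite the composition principle, and conclude.
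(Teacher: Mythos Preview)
Your proposal is correct and follows exactly the argument the paper gives in the paragraph immediately preceding Theorem~\ref{thm:lb-kernel}: the reduction from Theorem~\ref{thm:lb-dual} is a polynomial parameter transformation with~$\ell=|X|$, and composing a hypothetical polynomial kernel for \sCGM{} with a polynomial-time reduction back to CNF-SAT (using NP-hardness of CNF-SAT) would yield a polynomial kernel for CNF-SAT parameterized by~$|X|$, contradicting~\cite{DM10}. One small slip in your abstract statement of the transfer lemma: to reduce the kernelized~$B$-instance back to~$A$ one needs~$A$ NP-hard and~$B\in\text{NP}$, not the other way round---but your concrete application (``since CNF-SAT is NP-hard, apply a \ldots{} reduction from \sCGM{} to CNF-SAT'') has the roles right, so the argument goes through.
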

We have thus resolved the parameterized complexity of~\sCGM{}
parameterized by~$\ell$ on general graphs and now turn to the more
general \sLGM problem, which is~W[1]-hard with respect
to~$\ell$~\cite{BBFKN11}. Here, it would be desirable to obtain
fixed-parameter algorithms for parameter~$\ell$ at least for some
restricted inputs. In other words, we would like to further exploit
the structure of real-world instances to obtain tractability results.
A very natural approach here is to consider the size and structure of
the list-colorings~$\L(v)$ as additional parameter. Unfortunately, the
problem remains W[1]-hard even for the following very restricted case
of list-colorings. Recall, that the vertex-color graph is the
bipartite graph with vertex set~$V\cup C$ in which~$v\in V$ and~$c\in C$ are adjacent if and only if~$c\in \L(v)$.
\begin{theorem}\label{thm:lb-small-list}
  \LGM is W[1]-hard with respect to~$\ell$ even if the vertex-color
  graph is a disjoint union of paths.
\end{theorem}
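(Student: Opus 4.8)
The plan is to give a parameterized reduction from \MCC (Multicolored Clique), which is W[1]-hard with respect to the number $k$ of color classes, building an \sLGM instance $(M,G,\L)$ on a tree-like (or at least suitably sparse) graph whose vertex-color graph is a disjoint union of paths and whose dual parameter $\ell$ is bounded by a function of $k$. Recall that the vertex-color graph being a union of paths is a very strong structural restriction: each color $c\in C$ lies on at most two lists, and each vertex $v\in V$ has $|\L(v)|\le 2$; this is exactly the ``$\Delta=2$'' case referenced in Corollary~\ref{cor:lb-small-list}. So the engineering challenge is to encode the choice of one vertex per color class, and the verification of $\binom{k}{2}$ edges between the chosen vertices, using only these pairwise list overlaps.

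The main construction I would carry out is as follows. Let the \MCC instance have color classes $V_1,\dots,V_k$ and let $n'$ be an upper bound on $|V_i|$. First, I would build a \emph{selection gadget} for each color class $V_i$: a small collection of vertices whose colors force the solution $S$ to omit exactly one of $n'$ ``candidate'' vertices, the omitted one naming the clique vertex picked from $V_i$; the budget $\ell$ will be ``one deletion per class plus one deletion per pair'', i.e.\ $\ell=k+\binom{k}{2}=\binom{k+1}{2}$. Second, I would build an \emph{edge-verification gadget} for each pair $\{i,j\}$: it forces the solution to omit one vertex corresponding to an edge of the \MCC graph between $V_i$ and $V_j$, and the list-coloring is set up so that the omitted edge-vertex must be ``consistent'' with the vertices omitted in the selection gadgets for $i$ and for $j$. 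The trick to make all of this fit inside a union of paths is that a color appearing on exactly two lists is precisely a binary equality/inequality constraint between two vertices; chaining such constraints yields paths in the vertex-color graph, and with the ``omit exactly one per gadget'' budget these chains propagate a chosen index along the path. Connectivity of $G[S]$ I would guarantee cheaply by adding a universal ``spine'' attached to every candidate vertex (being careful that the spine vertices carry unique colors so they never enter the dual count and never break the path structure of the vertex-color graph — this may require a few extra private colors and a short verification that $H$ is still a union of paths).

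The key steps, in order: (1) fix the parameter budget $\ell=\binom{k+1}{2}$ and describe the vertex set of $G$ (candidate vertices, edge vertices, spine); (2) define the list-coloring so that every color occurs on at most two lists and the vertex-color graph $H$ decomposes into paths — this is the step I will have to present most carefully, and where an explicit small picture helps; (3) prove the forward direction: a multicolored clique yields a set $S$ with $|V|-|S|=\ell$, $G[S]$ connected, and a valid assignment $f$; (4) prove the backward direction: from any solution $S$ with $|V|-|S|\le\ell$ deduce (by a counting argument forced by the gadget colors) that exactly one candidate per class and one edge per pair is omitted, and that the omitted edge vertices certify adjacency among the omitted candidate vertices, hence a multicolored clique; (5) observe the reduction is polynomial and $\ell$ depends only on $k$, establishing W[1]-hardness.

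The step I expect to be the main obstacle is step~(2) together with the counting part of step~(4): forcing ``exactly one deletion per gadget'' using only degree-$\le 2$ list constraints is delicate, because a union of paths is so weak that naive gadgets leak — a solution could ``save'' deletions in one gadget and ``spend'' them elsewhere. The fix I anticipate is to make each gadget a path in $H$ whose two endpoint colors have multiplicity in $M$ chosen so that the one-to-one mapping $f$ is feasible if and only if a single vertex is dropped from that path, and to make the $\ell$ deletions tight so that no slack exists; verifying that the global budget $\ell=\binom{k+1}{2}$ admits no alternative distribution of deletions is the crux of correctness and will need a short but careful argument. If forcing it on literal paths proves too rigid, a fallback is to allow the vertex-color graph to be a union of paths \emph{and isolated \sGM-style stars} (which, as noted in the preliminaries, reduce to \sGM colors) and then argue the star part is harmless — but I would first try for the clean ``union of paths'' statement as written.
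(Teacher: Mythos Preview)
Your proposal has a genuine gap at exactly the point you flagged: the edge-verification gadgets. With the vertex-color graph constrained to be a disjoint union of paths, every color lies on at most two lists, so a color can link at most two vertices. To force the omitted edge-vertex for pair $\{i,j\}$ to be consistent with \emph{both} the omitted candidate in class~$i$ and the omitted candidate in class~$j$, you would need colors shared between the edge gadget and each of the two selection gadgets; but then those colors sit on lists in two different gadgets, and keeping the global vertex-color graph a union of paths while simultaneously preventing ``leakage'' of deletions between gadgets is not something your outline actually achieves. The fallback you mention (allowing \sGM-style stars) does not help here, since stars still do not give you the two-way consistency check an edge gadget needs. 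In short, the Clique route forces you to build nontrivial equality propagation across gadgets under a degree-$2$ constraint on~$H$, and you have not shown how to do it.

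The paper avoids this obstacle entirely by reducing from \textsc{Multicolored Independent Set} instead of \MCC. The point is that with Independent Set you are certifying \emph{non}-adjacency, and the connectivity requirement of~$G[S]$ does this for free: put one edge-vertex $v_{\{u,w\}}$ (with a unique color) adjacent to $u$ and $w$, add a universal vertex~$v^*$ adjacent to all of~$W$, and let~$S=V\setminus I$ for an independent set~$I$. Then~$G[S]$ is connected iff every edge-vertex keeps a neighbor in~$S$, i.e.\ iff no edge has both endpoints in~$I$. There are no edge-verification gadgets at all, and~$\ell=k$ rather than~$\binom{k+1}{2}$. The selection gadget for each label class is exactly the path idea you sketched: $x$ vertices, $x-1$ colors $c^i_1,\ldots,c^i_{x-1}$, the $j$th vertex carrying the list $\{c^i_{j-1},c^i_j\}$; this already makes~$H$ a disjoint union of paths (plus isolated edges for the unique colors), and forces exactly one deletion per class with no global counting subtlety. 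I recommend you rewrite the reduction from \textsc{Multicolored Independent Set}; the construction then fits in a page and the correctness proof is short.
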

\begin{proof}
  We reduce from the \textsc{Multicolored Independent Set}
  problem:%~\cite{FHRV09}:
  \begin{quote}
    \textbf{Input:} An undirected graph~$H=(W,F)$ and a
    vertex-labeling~$\lambda: W\to \{1, \ldots, k\}$.\\
    \textbf{Question:} Is there a set~$S\subseteq W$ such
    that~$|S|=k$, the vertices in~$S$ have pairwise different labels and~$H[S]$ has no edges? 
  \end{quote}
  
  \textsc{Multicolored Independent Set} has been shown to be W[1]-hard
  when parameterized by~$k$~\cite{FHRV09}. We call the colors of the \textsc{Multicolored Independent Set} labels to avoid
  confusion with the colors of the \LGM~instance.  Assume without loss of generality that each
  label class in~$H$ contains the same number~$x$ of vertices (this can be achieved by padding smaller classes with additional vertices) and that there is an arbitrary
  but fixed ordering of the vertices of~$H$.

  The reduction works as follows. We first describe the input graph
  $G$ for \sLGM. We let $V=V_0\cup V_1\cup \{v^*\}$, where $V_0=W$ and
  $V_1=\{v_e | e\in F(H)\}$. Now construct the edge set~$E$ of~$G$ as
  follows. First, make vertex~$v^*$ adjacent to all vertices
  of~$V_0$. Then, for each edge~$\{u,w\}$ of~$H$ make
  vertex~$v_{\{u,w\}}$ adjacent to~$u$ and~$w$. This
  completes the construction of~$G$. %This finishes
  Now let us describe the coloring of the vertices. We start with the
  colors given to $V_0=W$. For each label~$i$ from $\lambda$ do the following:
  create~$x-1$ colors~$c^i_{1}, \ldots , c^i_{x-1}$. Now, with respect
  to the above-mentioned ordering, color the first
  vertex of label class~$i$ with color~$c^i_1$, color any~$j$th vertex,~$2\le j\le
  x-1$, with the list $\{c^i_{j-1},c^i_{j}\}$, and finally color
  the~$x$th vertex with color~$c^i_{x-1}$. Next, color each %remaining
  vertex from $V_1\cup \{v^*\}$ with a unique color. Let~$\L$ denote
  the list-coloring of $V(G)$ that we just described. We define the motif~$M$ as
  the set containing each color present in $\L$.  Clearly, the reduction works
  in polynomial time. Note that~$|V|=kx+|E|+1$ and $|M|=k(x-1)+|E|+1$
  and thus~$\ell=|V|-|M|=k$. To prove our claim, it thus remains to
  show the correctness of the reduction.

  \begin{quote}
    $(H,k)$ is a yes-instance of \textsc{Multicolored Independent Set}
    $\Leftrightarrow$~$(M,G,\L)$ is a yes-instance of \sLGM.
  \end{quote}

  ($\Rightarrow$) Let~$S$ be a size-$k$ independent set with pairwise different vertex labels
  in~$H$. Consider the set~$Y:=V\setminus S$ in~$G$. First, note
  that~$G[Y]$ is connected: vertex~$v^*$ is adjacent to all vertices
  in~$Y\cap V_0$ and each vertex~$v_{\{u,w\}}$ of~$Y\cap V_1 =V_1$ has at least one neighbor in~$Y\cap V_0$, because at most one of the
  endpoints of~$\{u,w\}$ is in the independent set~$S$.
  
  It remains to show that we can assign colors to the vertices such
  that the union of the vertex colors is~$M$. All vertices
  with unique colors are contained in~$Y$ and their coloring is clear. All other vertices are
  in~$V_0$. Now consider label class~$i$ of~$V_0$. Exactly one vertex~$u$
  of label class~$i$ is contained in~$S$. Let~$j$ be the number such
  that~$u$ is the~$j$th vertex of the label class~$i$. Then, color
  the~$q$th vertex of label class~$i$ with color~$c^i_q$ if~$q<j$ and
  with color~$c^i_{q-1}$ if~$q>j$. Clearly this coloring assigns~$x-1$
  different colors to the vertices of each label class. Hence, there
  is a coloring of the vertices of~$Y$ that is equal to~$M$.

  ($\Leftarrow$) Let~$Y$ denote an occurrence of~$M$ in~$G$. First,
  observe that there are only~$x-1$ colors for the~$x$ vertices of
  each label class. Hence,~$Y$ contains exactly~$x-1$ vertices of each
  label class. Now let~$S$ denote the set containing, for each label
  class, the only vertex \emph{not} contained in~$Y$. Clearly, $|S|=k$
  and the elements of~$S$ have pairwise different labels in~$H$. Furthermore,~$S$ is an independent set
  in~$H$: since~$G[Y]$ is connected, there is for each edge
  vertex~$v_{\{u,w\}}$ at least one of its neighbors in~$Y$. Hence, at
  most one of the endpoints of each edge~$\{u,w\}$ is in~$S$.
\end{proof}

We immediately obtain the following.
\begin{corollary}\label{cor:lb-small-list}
  \LGM is W[1]-hard with respect to~$\ell$ even if~$|\L(v)|\le 2$ for
  every vertex $v$ in~$G$.
\end{corollary}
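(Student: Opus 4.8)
The plan is to obtain the corollary as an immediate weakening of Theorem~\ref{thm:lb-small-list}, using only the definition of the vertex-color graph. First I would recall that in the vertex-color graph~$H$ of an \sLGM{} instance~$(M,G,\L)$, a vertex~$v\in V$ is adjacent precisely to the colors~$c\in C$ with~$c\in\L(v)$; hence the degree of~$v$ in~$H$ is exactly~$|\L(v)|$. Now, in a disjoint union of paths every vertex has degree at most two, so whenever the vertex-color graph is a disjoint union of paths we automatically have~$|\L(v)|\le 2$ for every~$v\in V$. Therefore the class of instances on which Theorem~\ref{thm:lb-small-list} proves W[1]-hardness with respect to~$\ell$ is a subclass of the instances with maximum list size at most two, and the hardness (with the same parameter~$\ell$) transfers to this larger, more general class.

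For completeness I would also point out that the instances built by the reduction in the proof of Theorem~\ref{thm:lb-small-list} visibly satisfy~$|\L(v)|\le 2$: every vertex of~$V_1\cup\{v^*\}$ carries a single unique color, and within each label class of~$V_0$ the two endpoints of the corresponding path get a one-element list while all internal vertices get a two-element list. So no additional gadgetry is required. There is essentially no obstacle here: the only thing to verify is the elementary correspondence between the degree of a vertex in the vertex-color graph and the size of its list, after which the corollary follows directly from the theorem.
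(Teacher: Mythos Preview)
Your argument is correct and matches the paper's approach: the paper derives the corollary directly from Theorem~\ref{thm:lb-small-list} with the single remark ``We immediately obtain the following,'' and your explanation of why it is immediate (degree in the vertex-color graph equals list size, and paths have maximum degree two) is exactly the intended reasoning.
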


\section{Graph Motif on Trees}
\label{sec:gm}
Motivated by these negative results on general graphs, we now study the special case where the
input graph is a tree. For~\sLGM{}, we were not able to resolve the parameterized complexity
with respect to~$\ell$ for this case. Hence, we focus on the more restricted~\sGM{} problem. We
show that~\sGM{} is fixed-parameter tractable with respect to~$\ell$ if the input graph is a
tree. Recall that for general graphs,~\sGM{} is W[1]-hard for~$\ell$ even if the motif~$M$
contains only two colors~\cite{BBFKN11}. Hence, our result shows that the tree structure
significantly helps when parameterizing by~$\ell$. We then show that the fixed-parameter algorithm for
\sGM{} on trees extends to some special cases of \sLGM{} in which the vertex-color graph is
also a tree. Finally, we show that a polynomial-size kernel for~\sGM{} on trees
parameterized by~$\ell$ is unlikely.

\subsection{A Dynamic Programming Algorithm}  Call a
color of~$M$ \emph{abundant} if it occurs more often in~$G$ than
in~$M$. The abundant colors are exactly the ones that have to be
``deleted'' to obtain a solution~$S$. Let~$c_1,\ldots, c_j$ denote the
abundant colors of~$M$, and let~$\ell_i$ denote the difference between
the number of vertices in~$V$ that have color~$c_i$ and the
multiplicity $M(c_i)$ of~$c_i$ in~$M$. This implies in particular
that~$\sum_{1\le i\le j} \ell_i=\ell$.

The algorithm is a dynamic programming algorithm that works on a
rooted representation~$T$ of~$G$. We obtain~$T$ by choosing an arbitrary vertex~$r\in V$ and
rooting~$G$ at~$r$.  As usual for dynamic
programming on trees, the idea is to combine partial solutions of
subtrees. Our algorithm is somewhat similar to a previous dynamic
programming algorithm for \sGM{} on graphs of bounded
treewidth~\cite{FFHV11} but the analysis and concrete table setup is
different.

In the following, let~$T_v$ denote the subtree of~$T$ rooted at vertex~$v$. For each subtree,
we let~$\num(T_v,c)$ denote the number of vertices in~$T_v$ that have color~$c$. If a
solution contains vertices from~$T_v$ and further vertices, then it must contain~$v$ and all
vertices with nonabundant colors in~$T_v$.  Hence, in the dynamic programming it is sufficient
to consider subtrees described in the following definition.

\begin{definition}
We call a connected
subtree~$T'$ of~$T_v$ \emph{safe} if~$T'$ contains~$v$ and if every
vertex of~$T_v$ that is colored by a nonabundant color is contained
in~$T'$.
\end{definition}

We fill a family of dynamic programming tables~$D_v$, one table for each~$v\in V$. The entries of~$D_v$ are defined as
follows:
\[D_v[\lambda_1,\ldots,\lambda_j]=
 \begin{cases}
   1 & \text{if~$T_v$ has a safe subtree containing for each~$c_i$, $1\le i\le j$,}\\ & \text{exactly~$\num(T_v,c_i)-\lambda_i$ vertices of color~$c_i$}, \\
   0 & \text{otherwise}.
 \end{cases}
\]
Assume for now that the table has completely been filled out. Then, it
can be easily determined whether~$G$ has an occurrence~$S$ of~$M$.

If~$S$ is an occurrence of~$M$, then let~$v$ denote the root of~$T[S]$. Clearly,~$T[S]$ is a safe subtree
of~$T_v$. Moreover, every vertex with a nonabundant color is contained
in~$T[S]$ and for all vertices with an abundant color~$c_i$, the tree~$T[S]$
contains~$\num(T_v,c_i)-\lambda_i$ vertices with color~$c_i$ for
some~$\lambda_i\ge 0$. Thus, there is some table entry
$D_v[\lambda_1,\ldots,\lambda_j]$ whose value is~$1$ and
where~$\num(T_v,c_i)-\lambda_i$ is the multiplicity of~$c_i$ for
each~$c_i$.

Conversely, if there is some entry~$D_v[\lambda_1,\ldots,\lambda_j]$
with value 1 such that~$T_v$ contains all vertices with nonabundant
colors and for each~$c_i$, $1\le i\le j$,~$\num(T_v,c_i)-\lambda_i$ is
exactly the multiplicity of~$c_i$ in~$M$, then there is at least one
safe subtree of~$T_v$ whose vertex set is an occurrence of~$M$. 

Hence, one may solve~\sGM{} by filling table~$D$, and then checking
for each vertex~$v$ whether one of the entries
of~$D_v[\lambda_1,\ldots,\lambda_j]$ with value~1 implies the
existence of a solution. For the running time bound, the main
observation that we exploit is that if a safe rooted subtree of~$T_v$
contains all the vertices of~$T_v$ that are in a solution~$S$, then it
contains at least $\num(T_v,c_i)-\ell_i$ vertices with
color~$c_i$. Consequently, the relevant range of values
for~$\lambda_i$ is in~$[0,\ell_i]$ and thus bounded in the parameter
value~$\ell$. 

We now describe how to fill in table~$D$. To initialize~$D$,
consider each leaf~$v$ of the tree~$T$. By the 
definition of~$D$, an entry can have value~$1$ only if there is a
corresponding safe tree which needs to
contain~$v$. Thus, $$D_v[\lambda_1,\ldots,\lambda_j]=1 \Leftrightarrow
\lambda_1=\ldots = \lambda_j=0.$$ Now, to compute the entries of~$D$
for a nonleaf vertex~$v$, we combine the entries of the children
of~$v$. To this end, fix an arbitrary ordering of the children of~$v$
and denote them by~$u_1,\ldots ,u_{\deg(v)}$. Now, let~$T^i_v$ denote
the subtree rooted at~$v$ containing the vertices of
each~$T_{u_q}$,~$1\le q\le i$, and no vertices from
each~$T_{u_q}$,~$q>i$. For increasing~$i$, we compute solutions
for~$T^i_v$, eventually computing the solutions for~$T^{\deg(v)}_v=T_v$. To compute
these solutions, we define an auxiliary table~$D^i_v$. The table
entries are defined just as for~$D$, that is,
 \[D^i_v[\lambda_1,\ldots,\lambda_j]=
 \begin{cases}
   1 & \text{if~$T^i_v$ has a safe subtree containing for each~$c_i$,
     $1\le i\le j$,}\\ & \text{exactly~$\num(T_v,c_i)-\lambda_i$
     vertices of color~$c_i$}, \\ 
   0 & \text{otherwise}.
 \end{cases}
\]
Observe that, since $T^{\deg(v)}_v=T_v$, we have~$D^{\deg(v)}_v=D_v$ and thus by computing~$D^{\deg(v)}_v$ we also compute~$D_v$. 
Now,~$D^1_v$ can be computed in a straightforward fashion from the
entries of~$D_{u_1}$.
 \[D^1_v[\lambda_1,\ldots,\lambda_j]=
 \begin{cases}
   1 & \text{if~$D_{u_1}[\lambda_1,\ldots,\lambda_j]=1$}\\ 1 & \text{if~$\num(T_{u_1},c_i)=\lambda_i$ and~$T_{u_1}$ contains only abundant colors}, \\
   0 & \text{otherwise}.
 \end{cases}
\]
The first case corresponds to the case that the safe subtree~$T'$
of~$T^i_v$ contains at least one vertex of~$T_{u_1}$, the second case
corresponds to the case that~$T'$ contains only~$v$.

To compute~$D^i_v$ for~$i>1$, we combine entries of~$D^{i-1}_v$
with~$D_{u_i}$.
\begin{displaymath}
  D^i_v[\lambda_1,\ldots,\lambda_j] =
  \begin{cases}
    1 & \text{if $T_{u_i}$ contains only abundant colors and} \\ 
    &  D^{i-1}_v[\lambda_1-\num(T_{u_i},c_1),\ldots,\lambda_j-\num(T_{u_i},c_1)]=1,\\
    1 & \text{if there is $(\lambda'_1, \ldots , \lambda'_j)$ such that}\\
    &  D^{i-1}_v[\lambda_1',\ldots,\lambda_j']=D_{u_i}[\lambda_1-\lambda'_1,\ldots,\lambda_j-\lambda'_j]=1,\\
    0 & \text{otherwise.}
  \end{cases}
\end{displaymath}
The first case corresponds to the situation in which no vertex
of~$T_{u_i}$ is part of the safe subtree, in the second case the safe
subtree contains some vertices of~$T_{u_i}$ and some vertices
of~$T^{i-1}_v$. 

This completes the description of the dynamic programming
recurrences. The correctness follows from the fact that the
recurrence considers all possible cases to ``distribute'' the vertex
deletions. It remains to bound the running time.

\begin{theorem}
  \GM can be solved in~$\Oh(3^\ell\cdot n)$ time if~$G$ is a tree.
  \label{thm:gm-tree}
\end{theorem}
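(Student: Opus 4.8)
The plan is purely a running-time analysis: correctness of the dynamic program has already been argued above, so it remains to show that it runs in~$\Oh(3^\ell\cdot n)$ time.

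The first step is to fix the set of relevant table entries. As observed, if a safe subtree of~$T_v$ is part of a global occurrence~$S$, then it retains at least~$\num(T_v,c_i)-\ell_i$ vertices of each abundant color~$c_i$; thus only the entries of~$D_v$ (and of each auxiliary table~$D^i_v$) with~$0\le\lambda_i\le\ell_i$ for all~$i\in[j]$ can ever certify a solution, and we store only those. Their number is~$\prod_{i=1}^{j}(\ell_i+1)\le 2^\ell$, using~$\sum_i\ell_i=\ell$; note also~$j\le\ell$, since every abundant color contributes at least~$1$ to~$\ell$ (in particular each~$\ell_i\ge 1$). Leaf initialization and the computation of each~$D^1_v$ touch only~$\Oh(2^\ell)$ entries and are therefore cheap.

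The crux is the cost of one combination step, i.e.\ building~$D^i_v$ from~$D^{i-1}_v$ and~$D_{u_i}$. The first case of the recurrence is a single index shift over~$\Oh(2^\ell)$ entries. For the second case, every $1$-entry of~$D^i_v$ arises from a pair of index vectors~$(\lambda',\mu)$ with~$D^{i-1}_v[\lambda']=D_{u_i}[\mu]=1$ and~$\lambda'_i+\mu_i\le\ell_i$ for all~$i$; I would enumerate exactly these pairs coordinate by coordinate, so that within coordinate~$i$ only the~$\binom{\ell_i+2}{2}$ pairs~$(\lambda'_i,\mu_i)$ with~$\lambda'_i,\mu_i\ge 0$ and~$\lambda'_i+\mu_i\le\ell_i$ are generated, maintaining the flat array indices of~$\lambda'$, $\mu$, and~$\lambda'+\mu$ incrementally so that each generated pair costs~$\Oh(1)$. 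The number of enumerated pairs is~$\prod_{i=1}^{j}\binom{\ell_i+2}{2}$, and the key inequality is~$\binom{\ell_i+2}{2}=\tfrac{(\ell_i+1)(\ell_i+2)}{2}\le 3^{\ell_i}$ (trivial for~$\ell_i\in\{0,1\}$ and an easy induction for larger~$\ell_i$), whence~$\prod_i\binom{\ell_i+2}{2}\le 3^{\sum_i\ell_i}=3^\ell$. Since each~$\binom{\ell_i+2}{2}\ge 3$, the overhead of the nested-loop structure is a geometric series dominated by this product, so one combination step runs in~$\Oh(3^\ell)$ time.

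Finally, at each vertex~$v$ we perform~$\deg(v)$ combination steps (computing~$D^1_v,\ldots,D^{\deg(v)}_v$), and~$\sum_v\deg(v)=n-1$, so all combination steps together take~$\Oh(3^\ell\cdot n)$ time. Adding the~$\Oh(2^\ell\cdot n)$ term for the cheap steps and an~$\Oh(\ell\cdot n)$-time preprocessing pass (computing the counts~$\num(T_v,c_i)$ by a single traversal, and identifying the vertices~$v$ for which~$T_v$ already contains every nonabundant-colored vertex, over which the final scan checks whether the single relevant entry of~$D_v$ equals~$1$) yields a total of~$\Oh(3^\ell\cdot n)$, as~$\ell\le 3^\ell$. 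The main obstacle is exactly the combination step: the naive merge enumerates~$\prod_i(\ell_i+1)^2=\Oh(4^\ell)$ pairs, and bringing this down to~$3^\ell$ requires both restricting to pairs with~$\lambda'_i+\mu_i\le\ell_i$ and implementing the enumeration so that no~$\poly(\ell)$ factor sneaks back in.
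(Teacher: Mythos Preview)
Your proposal is correct and follows essentially the same approach as the paper: both identify that the dominating cost is one combination step, both count it as $\prod_i\binom{\ell_i+2}{2}$, and both bound this by~$3^\ell$. The only (minor) difference is how the bound is established: the paper argues that $\prod_i\tfrac{(\ell_i+1)(\ell_i+2)}{2}$ under the constraint $\sum_i\ell_i=\ell$ is maximized when every~$\ell_i=1$, whereas you use the per-coordinate inequality $\tfrac{(\ell_i+1)(\ell_i+2)}{2}\le 3^{\ell_i}$ and multiply; your route is arguably cleaner and also makes explicit the $\Oh(1)$-per-pair indexing and the preprocessing of the~$\num(T_v,c_i)$ values, which the paper leaves implicit.
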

\begin{proof}
  As described above, the relevant values of each~$\lambda_i$ are in~$[0,\ell_i]$. Thus, for
  each subtable~$D^i_v$ and~$D_v$, the number of entries to compute
  is~$\prod_{1\le i\le j}(\ell_i+1)$. The dominating term in the overall running time is the
  computation of the second term in the recurrence for~$D^i_v[\lambda_1,\ldots,\lambda_j]$
  where we consider all $(\lambda'_1, \ldots , \lambda'_j)$ such that
  $D^{i-1}_v[\lambda_1',\ldots,\lambda_j']=
  D_{u_i}[\lambda_1-\lambda'_1,\ldots,\lambda_j-\lambda'_j]=1$. The number of possible choices
  can be computed as follows. For each~$\lambda_i$, the values of~$\lambda'_i$ range between~$0$ and~$\lambda_i$. Overall this gives

  $$\sum_{j=0}^{\ell_i} j+1 = \sum_{j=1}^{\ell_i+1} j= (\ell_i+2)\cdot(\ell_i+1)/2$$ possibilities for choosing~$\lambda_i$ and~$\lambda'_i$.
  We now bound this product in~$\ell$. Thus, we aim to find the
  vector~$(\ell_1,\ldots ,\ell_j)$ that maximizes $\prod_{1\le i\le j}(\ell_i+2)\cdot(\ell_i+1)/2$ under the
  constraint~$\sum_{1\le i\le j} \ell_i = \ell$. We claim that this is the vector
  with~$\ell_1= \ldots \ell_j=1$.

  To this end, consider a vector~$(\ell_1,\ldots ,\ell_j)$ whose
  maximum entry is at least~$2$. Without loss of generality, assume
  thus~$\ell_j>1$. Now consider~$(\ell_1,\ldots ,\ell_j-1,1)$ and
  observe that~$(\sum_{1\le i< {j}} \ell_i)+(\ell_j-1) + 1 = \ell$, that
  is, the new vector also satisfies the summation
  constraint. Moreover,
 $$\frac{\prod_{1\le i\le
     j}((\ell_i+2)\cdot(\ell_i+1)/2)}{(\prod_{1\le i< j}((\ell_i+2)\cdot(\ell_i+1)/2) \cdot ((\ell_j+1)(\ell_j)/2) \cdot
   3}=\frac{(\ell_j+2)(\ell_j+1)/2}{3(\ell_j+1)(\ell_j)/2}=\frac{\ell_j+2}{3(\ell_j)}<1,$$ where the inequality
 follows from~$\ell_j>1$. Since the new vector has more entries with
 value 1, we conclude that the maximum value is reached when all
 entries assume value 1. Consequently, the worst case number of recurrences that need to be evaluated for filling a subtable~$D^i_v$ or~$D_v$ is~$\Oh(3^\ell)$. % The running time needed to
 % fill in the table~$D_v$ when~$v$ is a leaf vertex is
 % thus~$\Oh(2^\ell)$. Similarly,~$D^1_v$ can be computed in~$\Oh(2^\ell)$ 
 % time. It remains to bound the time needed to compute~$D^i_v$
 % for~$i>1$. For each entry~$D^i_v[\lambda_1,\ldots,\lambda_j]$, we
 % have to consider~$\Oh(2^\ell)$ entries of~$D^{(i-1)}_v$, and thus the
 % running time needed to compute~$D^i_v$ is~$\Oh(2^\ell\cdot
 % 2^\ell)=\Oh(4^\ell)$.
 The overall number of subtables to fill
 is~$\Oh(\sum_{v\in V}\deg(v))=\Oh(n)$. This implies the overall running
 time bound.
\end{proof}

\subsection{An Extension to Subcases of List-Colored Graph Motif on Trees}
The fixed-parameter tractability of~\sGM{} on trees can be extended to give fixed-parameter tractability for \sLGM{} when the input graph~$G$ is a tree and the vertex-color graph~$H$ is a forest with bounded degree.

The first step in our algorithm is to apply the following two data reduction rules which are obviously correct.
\begin{rrule}\label{rule:bad-color}
  If there is a color vertex~$v$ in~$H$ such that the degree of~$v$ in~$H$ is smaller than the multiplicity of~$v$ in~$M$, then return ``no''. 
\end{rrule}

\begin{rrule}\label{rule:tight-color}
  If there is a color vertex~$v$ in~$H$ such that the degree of~$v$ in~$H$ equals the multiplicity of~$v$ in~$M$, then set~$\L(u)=\{v\}$ for all neighbors~$u$ of~$v$ in~$H$. 
\end{rrule}

With these reduction rules at hand, we can show that the following special case of \sLGM{} is
fixed-parameter tractable with respect to~$\ell$.

\begin{lemma}\label{cor:sgm-to-gm}
  \sLGM{} can be solved in~$\Oh(3^\ell\cdot n)$ time if~$G$ is a tree and the vertex-color
  graph~$H$ is a forest in which for every color
  vertex~$c$, the difference between the degree of~$c$ in~$H$ and the multiplicity of~$c$
  in~$M$ is at most one.
\end{lemma}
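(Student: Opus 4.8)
The plan is to reduce an instance satisfying the hypotheses of the lemma to an equivalent \sGM{} instance on a tree, and then invoke Theorem~\ref{thm:gm-tree}. First I would apply Rules~\ref{rule:bad-color} and~\ref{rule:tight-color} exhaustively. After this, every color vertex~$c$ in~$H$ has degree exactly~$M(c)+1$ (since by assumption the difference is at most one, Rule~\ref{rule:bad-color} rules out the negative case, and Rule~\ref{rule:tight-color} would have fired on the zero case — though after firing it, the neighbors of~$c$ become degree-one vertices in~$H$, so I need to argue that the reduced vertex-color graph is still a forest and re-examine which color vertices are ``tight''; iterating the rules to a fixed point handles this). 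The key structural observation is then: in the reduced instance, $H$ is a forest, and every color vertex has degree exactly one more than its multiplicity. I want to show that in such an instance $H$ must in fact be a disjoint union of stars (each centered at a color vertex), i.e.\ the instance is essentially a \sGM{} instance.

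The main work is establishing that claim. Consider a connected component~$K$ of~$H$; it is a tree. Let $a$ be the number of color vertices and $b$ the number of graph vertices in~$K$, so $K$ has $a+b-1$ edges. The sum of color-vertex degrees equals the number of edges, so $\sum_{c\in K\cap C}(M(c)+1) = a+b-1$, i.e.\ $\sum_{c\in K\cap C} M(c) = b-1$. On the other hand, any occurrence~$S$ of~$M$ restricted to~$K$ must use, for each color~$c\in K\cap C$, exactly $M(c)$ graph vertices of~$K$ mapped to~$c$ (a counting/flow argument via Hall's theorem on the bipartite graph, using that color~$c$ has exactly $M(c)+1$ private candidates); summing, $S$ uses exactly $\sum_c M(c) = b-1$ of the $b$ graph vertices in~$K$, so exactly one graph vertex of~$K$ is left out. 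The point is that this forces a very rigid structure: I expect that the "at most one surplus per color" condition, combined with acyclicity, forces each graph vertex of~$K$ to lie in $\L(v)$ for a single color, i.e.\ the instance really is a \sGM{} instance where Rule~\ref{rule:tight-color} keeps propagating. Concretely, I would argue by induction on $|K|$: pick a leaf of~$K$; if it is a graph vertex, its unique color-neighbor $c$ has one fewer "competitor" after we decide whether to include it, and the surplus budget is exhausted, forcing all other neighbors of~$c$ into $c$; removing them and $c$ and decrementing $M$ appropriately yields a smaller instance still satisfying the hypotheses.

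The hard part — and the step I expect to be the real obstacle — is making this propagation argument airtight, in particular handling the interaction between deleting a color vertex (which can disconnect its graph-vertex neighbors, now degree-zero in~$H$, i.e.\ forced out of any solution) and the global connectivity constraint on~$G$. Degree-zero graph vertices in the reduced~$H$ have empty color lists and so cannot be in~$S$; I must ensure the bookkeeping correctly reflects that such vertices are simply removed from~$G$ (and the resulting graph is still a tree, a forest of trees, to which the algorithm applies componentwise or after noting only one component can contain a solution). Once the reduced instance is seen to be a \sGM{} instance — each graph vertex has a single permissible color and the motif is a multiset — on a graph that is a forest, I restrict to the (unique) component that can host a connected occurrence, which is a tree, and apply Theorem~\ref{thm:gm-tree} to get the $\Oh(3^\ell\cdot n)$ bound; the data reduction rules run in linear time and do not increase~$\ell$, so the overall bound follows.
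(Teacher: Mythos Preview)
Your central structural claim---that after exhausting Rules~\ref{rule:bad-color} and~\ref{rule:tight-color} the vertex-color graph~$H$ must be a disjoint union of stars centred at color vertices---is false, and the inductive propagation you sketch does not establish it. Consider the path $v_1\,\text{--}\,c_1\,\text{--}\,v_2\,\text{--}\,c_2\,\text{--}\,v_3$ in~$H$ with $M(c_1)=M(c_2)=1$. Both color vertices have degree~$2=M(c_i)+1$, so neither rule fires; yet~$v_2$ has $|\L(v_2)|=2$ and the component is not a star. Your leaf-peeling induction breaks down here because you cannot ``decide whether to include'' the leaf~$v_1$ without branching: whether~$v_1$ is in the occurrence genuinely depends on the connectivity structure of~$G$, not on~$H$ alone.

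What the paper does instead is stronger and avoids the structural claim entirely. Your counting argument correctly shows that in a component~$K$ with $b$ graph vertices and every color at surplus exactly one, any occurrence uses exactly $b-1$ of them. The missing step is the converse: \emph{every} subset of $b-1$ graph vertices of~$K$ admits a valid color assignment. This follows by rooting~$K$ (which is a tree) at the one omitted graph vertex~$v$ and assigning each remaining graph vertex the color of its parent; each color~$c$ then receives exactly its $\deg_H(c)-1=M(c)$ children. Hence the \sLGM{} constraint on~$K$ is equivalent to a \sGM{} constraint with a \emph{single} new color of multiplicity~$b-1$ assigned to all $b$ graph vertices of~$K$. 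Replacing each such component in this way (and handling the tight star components from Rule~\ref{rule:tight-color} by giving each leaf a unique color) yields an equivalent \sGM{} instance on the same tree~$G$, to which Theorem~\ref{thm:gm-tree} applies directly. No branching, no deletion of graph vertices, and no disconnection of~$G$ is needed.
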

\begin{proof}
  We describe a reduction of this special case of \sLGM{} on trees to~\sGM{} on trees. In the
  following, we assume without loss of generality that every color in the instance has
  multiplicity at least one in~$M$.
  First, apply Reduction Rules~\ref{rule:bad-color}
  and~\ref{rule:tight-color} exhaustively. This can be done in~$\Oh(n)$ time by computing the difference between~$M(c)$ and~$\deg_H(c)$ once for all~$c\in C$ and updating this value whenever we delete an edge. Afterwards, for every color
  vertex~$c$ in~$M$, we have~$M(c)\le \deg_H(c)$, due to Reduction Rule~\ref{rule:bad-color}, and~$M(c)\ge \deg_H(c)-1$ since the
  reduction rules do not increase the degree of a vertex. Moreover, if~$M(c)=\deg_H(c)$ in~$M$, then the connected component of~$c$ in~$H$ consists of~$c$
  and~$\deg_H(c)$ leaf neighbors of~$c$. By the above, the vertex-color graph~$H$ contains the
  two types of connected components considered below. For both of them, we show that the
  constraints of~$\L$ can be replaced by simple coloring constraints. Consider a connected
  component~$H'$ of~$H$.

  \emph{Case~1: $H'$ is a star with a central color vertex~$c$ such that~$M(c)=\deg_H(c)$
    in~$M$.}  Replace~$c$ by~$\deg_H(c)$ new colors and assign a different color to each
  neighbor of~$c$ in~$H$. This is equivalent as all neighbors of~$c$ in~$H$ are
  contained in any occurrence of~$M$.

  \emph{Case~2: $H'$ is a tree in which each color vertex~$c$ fulfills~$M(c)=\deg_H(c)-1$.}
  Let~$C'$ denote the set of color vertices in~$H'$ and~$V'$ denote the set of vertices of~$H'$
  that do not correspond to colors. Replace~$C'$ by one new color~$c^*$ and set the
  multiplicity of~$c^*$ to~$|V'|-1$. To show correctness of this replacement, we show that for
  every~$v\in V'$, there is an assignment~$f':V'\setminus \{v\}\to C'$ such
  that~$f'(u)\in \L(u)$ for each $u\in V'\setminus \{v\}$ and each color~$c\in C'$ is assigned
  exactly~$\deg_H(c)-1$ vertices by~$f'$. To see the existence of this assignment consider the
  version of~$H'$ that is rooted at~$v$. For each color vertex~$c$ in~$H'$, let~$V'_c$ denote
  the children of~$c$ in this rooted tree. For each vertex~$u\in V'_c$ set~$f'(u):=c$. With
  this assignment, there are exactly~$\deg_H(c)-1$ vertices that are assigned to~$c$ and every
  vertex~$u\in V'\setminus \{v'\}$ is assigned to some color~$c$ of~$C'$, namely to its
  predecessor in~$H'$.
  
  Applying these replacements exhaustively then results in an
  equivalent instance of~\sGM{} on trees which can be solved in the claimed
  running time due to Theorem~\ref{thm:gm-tree}.
\end{proof}

We now show how to use the running time bound of Lemma~\ref{cor:sgm-to-gm} to obtain a
fixed-parameter algorithm for the dual parameter~$\ell$ for the special case of \sLGM{}
when the color-vertex graph is a tree and each color has a bounded multiplicity
in~$M$. Thus, let~$M(C):=\max_{c\in C} \mult(c)$ denote the largest multiplicity in~$M$. We will achieve the
algorithm by branching on colors~$c$ where the difference between~$M(c)$ and~$\deg_H(c)$ is at least two. We call such a color vertex
\emph{2-abundant} in the following. The first step of the algorithm is to apply Reduction
Rules~\ref{rule:bad-color} and~\ref{rule:tight-color} exhaustively.

\begin{brule}\label{brule:abundant-in-tree}
  If the vertex-color graph~$H$ contains a connected component~$H'$ with at least one 2-abundant
  color vertex, then do the following.
  \begin{itemize}
  \item Root~$H'$ arbitrarily. 
  \item Choose some 2-abundant vertex~$c$
    of~$H'$ such that the subtree of~$H'$ rooted at~$c$ has no further 2-abundant~vertex. 
  \item Choose a set~$V_c$ of~$\mult(c)+1$ arbitrary children of~$c$. 
  \item For each~$u\in V_c$ branch into the case that~$c$ is removed from~$\L(u)$.
\end{itemize}
\end{brule}
\begin{proof}[Proof of correctness.]
First, observe that such a 2-abundant color vertex~$c$ always exists and that it
can be found in linear time by a bottom-up traversal of the rooted tree. Second, observe
that since~$\mult(c)\ge \deg_H(c)+2$, the vertex~$c$ has at least~$\mult(c)+1$
children. Hence, some child~$u$ of~$c$ in~$H'$ does not receive the color~$c$ in any
occurrence of~$M$. Thus, if the original instance is a yes-instance, then the branch in
which we remove~$c$ from~$\L(u)$ is a yes-instance. Conversely, any occurrence of~$M$ in
an instance created by the branching rule is an occurrence of~$M$ in the original
instance.
\end{proof}
If Branching Rule~\ref{brule:abundant-in-tree} does not apply, then we can solve the instance
in~$\Oh(3^\ell\cdot n)$ time by Lemma~\ref{cor:sgm-to-gm}.  It thus remains to ensure that the rule
cannot be applied too often. To this end, we apply one further reduction rule. To formulate the
rule we need the following definition. We call a connected component~$H'$ of the vertex-color
graph~$H$ \emph{costly} if~$H'$ either consists of just one vertex~$v\in V$ or~$H'$ is a tree
such that all color vertices~$c$ in~$H'$ have multiplicity exactly~$\deg_H(c)-1$ in~$M$.
\begin{rrule}\label{rule:costly-comps}
  If~$G$ contains at least~$\ell+1$ costly components, then return ``no''.
\end{rrule}
\begin{proof}[Proof of correctness.]
For
each costly component at least one vertex is not contained in any occurrence of~$M$. This
is obvious for those components consisting only of one vertex~$v$ from~$V$. For the other
costly components, this follows from Case~2 in the proof of Lemma~\ref{cor:sgm-to-gm}.
\end{proof}
Now it remains to observe that in each instance created by an application of Branching
Rule~\ref{brule:abundant-in-tree}, the number of costly components is increased by exactly
one. Hence, after at most~$\ell+1$ branching steps, Reduction Rule~\ref{rule:costly-comps} directly
reports that we have a ``no'' instance. Since we branch into~$\mult(c)+1\le \mult_C+1$ cases in
each application of Branching Rule~\ref{brule:abundant-in-tree}, we thus
create~$\Oh((\mult_C+1)^{\ell+1}))$ instances that either adhere to the conditions of
Lemma~\ref{cor:sgm-to-gm} or are rejected due to Reduction Rules~\ref{rule:bad-color} or~\ref{rule:costly-comps} and can thus be solved in~$\Oh(3^\ell\cdot n)$ time. Altogether, we obtain
the following running time.
\begin{theorem}\label{thm:lgm-tree-H-bounded-mult}
  If~$G$ is a tree, and the color vertex graph~$H$ is a forest, then~\sLGM{} can be solved in
  $\Oh((\mult_C+1)^{(\ell+1)}\cdot 3^\ell\cdot n)$~time.
\end{theorem}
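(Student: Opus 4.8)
The plan is to assemble the pieces that have already been developed in this subsection. First I would recall that the algorithm begins by applying Reduction Rules~\ref{rule:bad-color} and~\ref{rule:tight-color} exhaustively, which takes $\Oh(n)$ time as argued in the proof of Lemma~\ref{cor:sgm-to-gm}. After this preprocessing, every color vertex~$c$ in the vertex-color graph~$H$ satisfies $M(c)\le\deg_H(c)$, and each connected component of~$H$ is either a costly component, a star with central color vertex~$c$ satisfying $M(c)=\deg_H(c)$, or a component containing at least one 2-abundant color vertex. The algorithm then repeatedly applies Branching Rule~\ref{brule:abundant-in-tree} together with Reduction Rule~\ref{rule:costly-comps}: whenever some component contains a 2-abundant color vertex, we branch, and whenever there are at least $\ell+1$ costly components we reject.

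The key combinatorial observation to record is the one already stated: each application of Branching Rule~\ref{brule:abundant-in-tree} turns the subtree rooted at the chosen color vertex~$c$ into a new costly component (in each branch, $c$ loses a child, so after the branch $c$ has multiplicity $\deg_H(c)-1$, and by the choice of~$c$ there are no further 2-abundant vertices below it), so the number of costly components increases by exactly one in every branch. Since Reduction Rule~\ref{rule:costly-comps} rejects any instance with at least $\ell+1$ costly components, the branching depth is at most $\ell+1$. Combined with the fact that each branching step splits into $M(c)+1\le \mult_C+1$ cases, the total number of leaves in the search tree is $\Oh((\mult_C+1)^{\ell+1})$. At each leaf, Branching Rule~\ref{brule:abundant-in-tree} no longer applies, so either the instance has already been rejected by a reduction rule, or the vertex-color graph~$H$ is a forest in which every color vertex~$c$ satisfies $\deg_H(c)-1\le M(c)\le\deg_H(c)$; in the latter case Lemma~\ref{cor:sgm-to-gm} solves the instance in $\Oh(3^\ell\cdot n)$ time.

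Finally I would put the running time together: $\Oh((\mult_C+1)^{\ell+1})$ leaves, each requiring $\Oh(3^\ell\cdot n)$ time (which dominates the $\Oh(n)$ work spent on the reduction and branching rules along each root-to-leaf path, since the path has length $\Oh(\ell)$ and each rule application is linear), giving the claimed $\Oh((\mult_C+1)^{\ell+1}\cdot 3^\ell\cdot n)$ bound. One subtlety worth a sentence is that~$\ell$ does not change along a branch: removing a color from a list does not change $|V|$ or $|M|$, so the parameter~$\ell$ used by Lemma~\ref{cor:sgm-to-gm} at a leaf is the same~$\ell$ as in the original instance, and the bound on the number of costly components is genuinely $\ell+1$ throughout.

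I expect the main obstacle to be nothing deep but rather a matter of careful bookkeeping: verifying that after exhaustive application of the reduction rules the only components left are exactly the three types listed, so that every leaf of the search tree indeed satisfies the hypothesis of Lemma~\ref{cor:sgm-to-gm} or has been rejected, and confirming that the interplay between Reduction Rule~\ref{rule:costly-comps} and the monotone growth of the number of costly components correctly caps the recursion depth. The running-time arithmetic itself is routine once these structural invariants are in place.
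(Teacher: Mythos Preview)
Your proposal follows essentially the same approach as the paper, and the overall structure is correct: branch on 2-abundant color vertices, use Reduction Rule~\ref{rule:costly-comps} to cap the depth at~$\ell+1$, and invoke Lemma~\ref{cor:sgm-to-gm} at the leaves.

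There is, however, a small error in your explanation of why each branching step increases the number of costly components. You write that the subtree rooted at~$c$ becomes a new costly component because ``$c$ has multiplicity $\deg_H(c)-1$'' after losing a child. This is not right on two counts. First, removing~$c$ from~$\L(u)$ does not change the multiplicity~$M(c)$; it decreases~$\deg_H(c)$ by one. Second, the subtree rooted at~$c$ is still attached to~$c$'s parent in~$H'$, so it is not a new component at all; and if~$c$ was more than 2-abundant (say $\deg_H(c)-M(c)\ge 3$), then~$c$ remains 2-abundant after the branch, so that subtree is not costly either. The correct observation is that the \emph{detached} subtree rooted at the child~$u$ becomes a new component, and this component is costly: every color vertex~$c'$ strictly below~$c$ satisfies $M(c')=\deg_H(c')-1$ (it is not 2-abundant by the choice of~$c$, and $M(c')=\deg_H(c')$ is impossible after Reduction Rule~\ref{rule:tight-color} since~$c'$ shares a component with the distinct color vertex~$c$). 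With this correction your bookkeeping goes through and matches the paper's argument.
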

When~$M$ is a set, the largest multiplicity is one, giving the following running
time.
\begin{corollary}
  If~$G$ is a tree,~$H$ is a forest, and~$M$ is a set, then~\sLGM{} can be solved in
  $\Oh(6^\ell\cdot n)$~ time.
\end{corollary}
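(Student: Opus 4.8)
The plan is to derive this directly from Theorem~\ref{thm:lgm-tree-H-bounded-mult} by specializing the multiplicity bound. The key observation is that when~$M$ is a set, every color occurs with multiplicity at most one in~$M$; under the standing assumption (used throughout this subsection) that every color has multiplicity at least one, this means~$\mult(c)=1$ for all~$c\in C$, and hence~$\mult_C=\max_{c\in C}\mult(c)=1$.

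Substituting~$\mult_C=1$ into the running time bound~$\Oh((\mult_C+1)^{(\ell+1)}\cdot 3^\ell\cdot n)$ of Theorem~\ref{thm:lgm-tree-H-bounded-mult} yields
\[
\Oh(2^{\ell+1}\cdot 3^\ell\cdot n)=\Oh(2\cdot 6^\ell\cdot n)=\Oh(6^\ell\cdot n),
\]
which is the claimed bound. (If one prefers not to invoke the ``multiplicity at least one'' convention, one may first delete all colors of multiplicity zero from~$C$ and all their incident edges from~$H$ in linear time, which changes neither the instance's answer nor~$\ell$ and leaves~$H$ a forest.) There is no real obstacle here: the statement is a straightforward corollary obtained by plugging~$\mult_C=1$ into the already-established bound, so the only thing to check is the arithmetic~$2^{\ell+1}\cdot 3^\ell=2\cdot 6^\ell$ and that the linear factor~$n$ and the preprocessing steps from Theorem~\ref{thm:lgm-tree-H-bounded-mult} are unaffected by restricting~$M$ to be a set.
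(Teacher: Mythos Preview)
Your proposal is correct and follows exactly the paper's approach: the paper simply notes that when~$M$ is a set the largest multiplicity is one, and substitutes~$\mult_C=1$ into the bound of Theorem~\ref{thm:lgm-tree-H-bounded-mult} to obtain the corollary. Your added remark about removing zero-multiplicity colors is a harmless extra justification not present in the paper.
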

By observing that Branching Rule~\ref{brule:abundant-in-tree} branches into at
most~$\deg_H(c)-1$ branches, we also obtain the following running time bound in terms of
the maximum degree of color vertices in~$H$.
\begin{corollary}
  If~$G$ is a tree, and~$H$ is a tree whose color vertices have degree at most~$\Delta_C$, then~\sLGM{} can be solved in $\Oh((\Delta_C-1)^{(\ell+1)}\cdot 3^\ell\cdot n)$~time.
\end{corollary}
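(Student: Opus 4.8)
The plan is to re-run the argument behind Theorem~\ref{thm:lgm-tree-H-bounded-mult} essentially verbatim, changing only how the branching factor of Branching Rule~\ref{brule:abundant-in-tree} is bounded. Concretely, I would first apply Reduction Rules~\ref{rule:bad-color} and~\ref{rule:tight-color} exhaustively; then, while some connected component of the vertex-color graph~$H$ contains a 2-abundant color vertex, apply Branching Rule~\ref{brule:abundant-in-tree}; and finally, in every leaf instance in which no 2-abundant vertex survives, invoke the forest case of Lemma~\ref{cor:sgm-to-gm} to solve the instance in~$\Oh(3^\ell\cdot n)$ time. Correctness of the individual rules has already been established, so only the running-time accounting is new.

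The single new ingredient is the improved branching-factor bound. When Branching Rule~\ref{brule:abundant-in-tree} fires on a color vertex~$c$, that vertex is 2-abundant, i.e.\ $\deg_H(c)\ge \mult(c)+2$; since the color vertices of~$H$ have degree at most~$\Delta_C$, the set~$V_c$ of~$\mult(c)+1$ children selected by the rule has size~$\mult(c)+1\le \deg_H(c)-1\le \Delta_C-1$, so the rule branches into at most~$\Delta_C-1$ subinstances rather than~$\mult_C+1$. The rest of the bookkeeping is unchanged: each application of the rule increases the number of costly components by exactly one, so Reduction Rule~\ref{rule:costly-comps} caps the branching depth at~$\ell+1$. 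Hence the search tree has~$\Oh((\Delta_C-1)^{\ell+1})$ leaves, each of which is either rejected by Reduction Rule~\ref{rule:bad-color} or~\ref{rule:costly-comps} or is handled by Lemma~\ref{cor:sgm-to-gm} in~$\Oh(3^\ell\cdot n)$ time; multiplying the two quantities gives the claimed~$\Oh((\Delta_C-1)^{(\ell+1)}\cdot 3^\ell\cdot n)$ bound.

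I do not expect a real obstacle, since this is a direct specialization of Theorem~\ref{thm:lgm-tree-H-bounded-mult}; the only point that warrants a sentence of care is that the structural hypothesis needed by Lemma~\ref{cor:sgm-to-gm} is maintained at the leaves of the search tree. Deleting the edge~$\{u,c\}$ inside Branching Rule~\ref{brule:abundant-in-tree} keeps~$H$ a forest, and the only degree changes caused by the branching and the subsequent reapplication of Reduction Rules~\ref{rule:bad-color} and~\ref{rule:tight-color} are decreases; therefore, once no 2-abundant vertex remains, every surviving color vertex~$c$ satisfies~$\deg_H(c)-1\le \mult(c)\le \deg_H(c)$, which is exactly the condition required by Lemma~\ref{cor:sgm-to-gm}. (I would also note the harmless corner case in which the chosen~$c$ is the root of its component and hence has~$\deg_H(c)$ rather than~$\deg_H(c)-1$ children: this only makes more children available for~$V_c$ and does not affect the~$\Delta_C-1$ bound.)
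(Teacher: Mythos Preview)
Your proposal is correct and takes essentially the same approach as the paper. The paper's justification is literally a single sentence---observing that Branching Rule~\ref{brule:abundant-in-tree} branches into at most~$\deg_H(c)-1\le \Delta_C-1$ cases---and your argument expands on exactly this, with some extra care about the leaf instances that the paper leaves implicit.
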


\subsection{A Kernelization Lower Bound}  We now show that \sGM does
not admit a polynomial-size problem kernel with respect to~$\ell$, even
if~$G$ is a tree. The proof is based on a
cross-composition~\cite{BJK14} from the \MCC problem.

\begin{quote}
  \MCC\\
  \textbf{Input:} An undirected graph~$H=(W,F)$ and a vertex-labeling~$\lambda:W\to \{1,\ldots ,k\}$.\\
  \textbf{Question:} Is there a vertex set~$S\subseteq W$
  such that~$|S|=k$, the vertices in~$S$
  have pairwise different labels and~$H[S]$ is a clique? 
\end{quote}

\MCC has been shown to be W[1]-hard parameterized by $k$~\cite{FHRV09}.
We refer to the colors of the \MCC instance as labels to avoid confusion with the colors of the \sGM
instance. 
Informally, cross-compositions are reductions that combine many
instances of one problem into one instance of another
 problem. The existence of a cross-composition from an NP-hard
problem to a parameterized problem~$Q$ implies that~$Q$ does not admit
a polynomial-size problem kernel (unless~$\text{NP}\subseteq
\text{{coNP/poly}}$)~\cite{BJK14}. % The formal definition is as
% follows.
\begin{definition}[\cite{BJK14}]
  Let $L\subseteq \Sigma^*$ be a language, let~$R$ be a polynomial
  equivalence relation on $\Sigma^*$, and let $Q\subseteq
  \Sigma^*\times \mathbb{N}$ be a parameterized problem.  An
  \emph{or-cross-composition of $L$ into $Q$} (with respect to~$R$) is
  an algorithm that, given $t$ instances $x_1, x_2, \ldots , x_t \in
  \Sigma^*$ of $L$ belonging to the same equivalence class of $R$,
  takes time polynomial in~$\sum_{i=1}^t |x_i| + k$ and outputs an
  instance $(y,k)\in \Sigma^*\times \mathbb{N}$ of $Q$ such that
\begin{itemize}
\item the parameter value $k$ is polynomially bounded in $\max_{i=1}^t | x_i | +\log t$, and
\item the instance $(y,k)$ is a yes-instance for $Q$ if
  and only if at least one instance $x_i$ is a yes-instance for $L$.
\end{itemize}
\end{definition}

We present an or-cross composition of~\MCC into \sGM on trees
parameterized by~$\ell$. The polynomial equivalence relation~$R$ will
be simply to assume that all the \MCC instances have the same number
of vertices~$n$. The main trick is to encode vertex identities in the
graph of the~\MCC instance by numbers of colored vertices in the \sGM
instance; this approach was also followed in previous works
on~\sGM~\cite{FFHV11,BS15}.

Given~$t$
instances~$(H_1=(W_1,F_1),\lambda_1), H_2=(W_2,F_2),\lambda_2), \ldots ,
H_t=(W_t,F_t),\lambda_t)$ of~\MCC such that~$|W_i|=n$ for all~$i\in [t]$,
we reduce to an instance of \sGM where the input graph is a tree as
follows. Herein, we assume without loss of generality that~$t=2^s$ for
some integer~$s$.

The first construction step is to add one vertex~$r$
that connects the different parts of the instance and which will be
contained in every occurrence of the motif. The vertex~$r$ thus
receives a unique color that may not be deleted. To this vertex~$r$ we
 attach subtrees corresponding to edges of the input
instances. Deleting vertices of such a subtree then corresponds to
selecting the endpoints of the corresponding edge. 

\subparagraph*{Instance selection gadget.} The technical difficulty in the
construction is to ensure that the solution of \sGM deletes only vertices in
subtrees corresponding to edges of the same graph. To achieve this, we
introduce~$k\cdot (k-1)\cdot \log t$ instance selection
colors~$\iota[p,q,\tau]$ where~$p\in [k]$,~$q\in [k]\setminus \{p\}$,
and~$\tau\in [\log t]$, and demand that the solution deletes exactly
one vertex of each instance selection color. To ensure that exactly
one instance is selected, we use two further colors~$\iota^+$
and~$\iota^-$.

For each \MCC instance~$(H_i,\lambda_i)$, attach an \emph{instance
  selection path~$P_i$} to~$r$ that is constructed based on the
number~$i$. Let~$b(i)$ 
denote the binary expansion of~$i$ and let~$b_\tau(i)$,~$\tau\in [\log t]$,
denote the~$\tau$th digit of~$b(i)$. Construct a
path~$P_i$ containing first a vertex with color~$\iota^+$, then in
arbitrary order exactly one vertex of each color in the color
set $I_i:=\{\iota[p,q,\tau] : b_\tau(i)=1\}$, and then a vertex
with color~$\iota^-$. Attach the path~$P_i$ to~$r$ by making the vertex
with color~$\iota^+$ a neighbor of~$r$.

The idea of the construction is that exactly one instance selection
path~$P_i$ is completely deleted and that this will force any solution to
delete paths that ``complement''~$P_i$ (that is, paths which
contain all $\iota[p,q,\tau]$ such that $b_\tau(i)=0$) in the rest of
the graph.

\subparagraph*{Edge selection gadget.} To force deletion of subtrees
corresponding to exactly~$\binom{k}{2}$ edges with different labels,
we introduce~$2k(k-1)$ label selection colors~$\lambda[p,q]^+$
and~$\lambda[p,q]^-$ where~$p\in [k]$ and~$q\in [k]\setminus
\{p\}$. These colors will ensure that, for each pair of labels~$p$
and~$q$, the solution deletes exactly one path corresponding to the
ordered pair~$(p,q)$ and one path corresponding to the
pair~$(q,p)$. 

There are two further sets of colors. One set is used for ensuring
vertex consistency of the chosen edges, that is, to make sure that all
the selected edges with label pair~$(p,\cdot)$ correspond to the same
vertex with label~$p$. More precisely, we introduce a
color~$\omega[p,q]$ for each~$p\in [k]$ and each~$q\in [k]\setminus
\{p\}$, except for the biggest~$q\in [k]\setminus \{p\}$.

The final color set is used to check that the edges selected for label
pair~$(p,q)$ and for label pair~$(q,p)$ are the same. To this end, we introduce a set
of colors~$\varepsilon[p,q]$ for each~$p\in [k]$ and each~$q\in
[k]\setminus \{p\}$ such that~$q>p$. To perform the checks of vertex
and edge consistency, we encode the identities of vertices and edges
into path lengths. More precisely, we assign each vertex~$v\in W_i$ a unique
(with respect to the vertices of~$W_i$) number~$\#(v)\in [n]$. 

Now, for each label pair~$(p,q)$ and each instance~$i$, attach one
path~$P_i(u,v)$ to~$r$ for each edge~$\{u,v\}$ where~$u$ has color~$p$
and~$v$ has color~$q\neq p$. The path~$P_i(u,v)$
\begin{itemize}
\item starts with a vertex with color~$\lambda[p,q]^+$ that is made adjacent to~$r$,
\item then contains exactly one vertex of each color in~$\{\iota[p,q,\tau]: \iota[p,q,\tau]\notin
  I_i\}$,
\item then contains~$\#(u)$ vertices of color~$\varepsilon[p,q]$
  if~$p<q$ and~$n-\#(v)$ vertices of color~$\varepsilon[q,p]$ if~$p>q$,
\item then, if~$q$ is not the biggest label in~$[k]\setminus p$,
  contains~$\#(u)$ vertices with color~$\omega[p,q]$,  
\item then, if~$q$ is not the smallest label in~$[k]\setminus p$,
  contains~$n-\#(u)$ vertices with color~$\omega[p,q']$, where~$q'$ is
  the next-smaller label in~$[k]\setminus p$ (if~$p=q-1$,
  then~$q'=q-2$; otherwise~$q'=q-1$), and
\item ends with a vertex with color~$\lambda[p,q]^-$.
\end{itemize}

Let~${\cal C}$ denote the multiset containing all the vertex colors of all
vertices added during the construction with their respective
multiplicities. In the correctness proof it will be easier to argue
about the colors that are {\em not} contained in~$M$. Hence, the
construction is completed by setting the multiset~$D$ of colors to
``delete'' to contain each color exactly once except
  
\begin{itemize}
\item the color of~$r$ which is not contained in~$D$,
\item the vertex consistency colors~$\omega[p,q]$ each of which is
  contained with multiplicity~$n$, and
\item the edge selection colors $\varepsilon[p,q]$ each of which is contained with multiplicity~$n$.
\end{itemize}
The motif~$M$ is defined as~$M:= {\cal C}\setminus D$. % This completes
% the construction, 
It remains to show the correctness.
\begin{theorem}\label{thm:gm-tree-no-kernel}
  \GM does not admit a polynomial-size problem kernel with respect
  to~$\ell$ even if~$G$ is a tree unless~$\text{\rm NP}\subseteq \text{\rm {coNP/poly}}$.
\end{theorem}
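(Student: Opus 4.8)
The plan is to verify that the construction described before the theorem statement is indeed an or-cross-composition of \MCC into \sGM on trees parameterized by $\ell$. By the cross-composition framework~\cite{BJK14}, this immediately yields the claimed kernelization lower bound, since \MCC is NP-hard. Two things must be checked: (i)~the parameter $\ell$ of the output instance is polynomially bounded in $n + \log t$ (where $n$ is the common number of vertices of the $t$ input graphs), and (ii)~the output \sGM instance is a yes-instance if and only if at least one $(H_i,\lambda_i)$ is a yes-instance of \MCC. For (i), one observes that $\ell = |V|-|M| = |D|$ minus the colors whose full multiplicity lies in $D$, so $\ell$ equals the total multiplicity of $D$; counting, $D$ contains $k(k-1)\log t$ instance selection colors, $2k(k-1)$ label selection colors, $k(k-1)$ consistency colors $\omega[p,q]$ each with multiplicity $n$, $\binom{k}{2}$ edge colors $\varepsilon[p,q]$ each with multiplicity $n$, and $2$ colors $\iota^+,\iota^-$. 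Hence $\ell = \Oh(k^2(n+\log t))$, which is polynomially bounded in $n+\log t$ as required (here $k\le n$).

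For the forward direction of (ii), I would take a multicolored clique $S=\{w_1,\dots,w_k\}$ in some $H_i$ with $\lambda_i(w_p)=p$, and build the set of vertices to delete: the entire instance selection path $P_i$, and for every ordered label pair $(p,q)$ the path $P_i(w_p,w_q)$ (which exists because $\{w_p,w_q\}\in F_i$). One then checks that these deletions remove exactly the multiset $D$ of colors: the $\iota[p,q,\tau]$ colors are covered once each because the digits $b_\tau(i)=1$ are supplied by $P_i$ and the digits $b_\tau(i)=0$ are supplied by each $P_i(w_p,w_q)$; the $\lambda[p,q]^{\pm}$ and $\iota^{\pm}$ colors are covered once each by construction; and the $\varepsilon[p,q]$ and $\omega[p,q]$ counts add up to exactly $n$ each because, e.g., $P_i(w_p,w_q)$ contributes $\#(w_p)$ vertices of $\varepsilon[p,q]$ while $P_i(w_q,w_p)$ contributes $n-\#(w_q)$ — and these sum to $n$ precisely when $\#(w_p)=\#(w_q)$, i.e. never; instead the intended bookkeeping is that $\varepsilon[p,q]$ (for $p<q$) appears only on paths with first coordinate $p$ and on paths with first coordinate $q$, with counts $\#(w_p)$ and $n-\#(w_q)$, so consistency forces $\#(w_p)$ to match across the two paths carrying this color, i.e. the edge chosen for $(p,q)$ and the one chosen for $(q,p)$ have the same $p$-endpoint and the same $q$-endpoint. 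The remaining vertices, including $r$, form a connected subtree (all surviving path-prefixes stay attached to $r$), so their color multiset equals $M$. This establishes one direction.

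For the converse, let $X$ be the set of deleted vertices in a solution; since $r$ has a unique undeletable color and every attached path hangs off $r$, connectivity forces $X$ to be a union of \emph{suffixes} of the attached paths. The colors $\iota^+$ appear only as the first vertex of an instance selection path and $\lambda[p,q]^+$ only as the first vertex of an edge path, while $\iota^-$, $\lambda[p,q]^-$ appear only as last vertices; since each of $\iota^\pm$ is in $D$ with multiplicity one and each $\lambda[p,q]^\pm$ with multiplicity one, exactly one instance selection path is deleted \emph{entirely} and, for each $(p,q)$, exactly one edge path $P_i(u,v)$ is deleted entirely. Call $i^\star$ the index of the deleted instance selection path. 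The instance selection colors then force every deleted edge path to belong to instance $i^\star$: a path $P_j(u,v)$ with $j\ne i^\star$ would contribute some $\iota[p,q,\tau]$ with $b_\tau(j)\ne b_\tau(i^\star)$, and counting these colors (each in $D$ once) against the single copy from $P_{i^\star}$ shows the digit patterns must coincide, i.e. $j=i^\star$. Finally the $\omega$- and $\varepsilon$-color counts (each forced to be exactly $n$) enforce vertex consistency within label $p$ and edge consistency between $(p,q)$ and $(q,p)$, so the chosen edges all live on a single vertex $w_p$ per label $p$ and these $w_p$ are pairwise adjacent in $H_{i^\star}$; since their labels are $1,\dots,k$, they form a multicolored clique. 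Hence $(H_{i^\star},\lambda_{i^\star})$ is a yes-instance, completing (ii).

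\textbf{Main obstacle.} The delicate part is the converse direction, specifically the precise arithmetic showing that the two chained families of counting colors — $\varepsilon[p,q]$ for edge consistency and the overlapping $\omega[p,q]$/$\omega[p,q']$ encoding for vertex consistency — simultaneously pin down that all edges incident to label $p$ share the same $p$-endpoint and that the $(p,q)$- and $(q,p)$-edges coincide, using only the aggregate multiplicity constraint ``$= n$'' rather than per-path information. One has to carefully trace which paths carry each of these colors (noting the boundary cases $p=q-1$, $q$ smallest/largest in $[k]\setminus\{p\}$ spelled out in the construction) and check the telescoping sums close up exactly, with no off-by-one slack. The instance-selection argument via binary expansions is comparatively routine, as is the connectivity analysis.
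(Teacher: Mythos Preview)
Your approach is the same as the paper's: verify the parameter bound, then prove the two directions of the or-cross-composition equivalence by deleting $P_i$ together with the $k(k-1)$ paths $P_i(w_p,w_q)$ in one direction, and by reading off instance/vertex/edge consistency from the color counts in the other. The backward direction and the overall structure are fine.

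There is, however, a concrete misreading that derails your forward direction. For the edge-consistency colors the construction says: in $P_i(u,v)$, if the label of~$u$ exceeds the label of~$v$, then the path carries $n-\#(v)$ vertices of colour $\varepsilon[\lambda(v),\lambda(u)]$. Thus for $p<q$, the path $P_i(w_q,w_p)$ contributes $n-\#(w_p)$ vertices of colour $\varepsilon[p,q]$, not $n-\#(w_q)$ as you write (twice). With the correct reading the two contributions are $\#(w_p)$ and $n-\#(w_p)$, which sum to~$n$ unconditionally; there is no ``consistency'' issue to resolve in the forward direction at all. Your attempted fix --- reinterpreting the forward count as an edge-consistency constraint --- conflates the two directions and does not actually establish that the deleted colour multiset equals~$D$. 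Once you correct the index, the forward direction is the straightforward bookkeeping you expected, and the delicate consistency arguments live only in the converse, exactly as you identified in your ``main obstacle'' paragraph. (A minor count: there are $k(k-2)$ colours $\omega[p,q]$, not $k(k-1)$, since the largest $q\in[k]\setminus\{p\}$ is excluded; this does not affect the polynomial bound.)
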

\begin{proof}
  To complete the proof we need to show that the construction fulfills
  the properties of cross-compositions. First, the construction
  clearly runs in polynomial time. Second, the number of introduced
  colors is polynomial in~$k +\log t$ and thus the value of~$\ell=|D|$
  is bounded polynomial in~$n+\log t$. Thus, it remains to show that
  the composition is an or-cross composition, that is:
  \begin{quote}
    At least one~$(H_i,\lambda_i)$ is a yes-instance of~\MCC
    $\Leftrightarrow$ $(M,G,\L)$ is a yes-instance of~\sGM.
  \end{quote}

  ($\Rightarrow$) Let~$S\in W_i$ be a vertex set of size~$k$ such
  that~$H_i[S]$ is a clique and the vertices in~$S$ have pairwise different labels. Consider the induced subgraph~$G'$ of~$G$
  obtained by completely deleting the path~$P_i$ and, for
  each~$\{u,v\}\in H_i[S]$, the paths~$P_i(u,v)$ and~$P_i(v,u)$. Since
  only complete paths are deleted and since each path in~$G$ is
  attached to~$r$, the graph~$G'$ is connected. It remains to show
  that the multiset of deleted colors is~$D$. First,~$\iota^+$
  and~$\iota^-$ are deleted once and contained once in~$D$. Second,
  each instance selection color~$\iota[p,q,\tau]$ is deleted once as
  required by~$D$: If $\iota[p,q,\tau]$ is contained in~$P_i$, then it
  is not contained in any~$P_i(u,v)$. Conversely, if $\iota[p,q,\tau]$
  is not contained in~$P_i$ then it is contained in each~$P_i(u,v)$
  where~$u$ has color~$p$ and~$v$ has color~$q$. Third, exactly~$n$
  vertices of each vertex consistency color~$\omega[p,q]$ are deleted:
  these vertices are contained only in two paths~$P_i(u,v)$, namely
  if~$u$ has label~$p$ and~$v$ has either label~$q$ or
  label~$q+1$. Since all the deleted paths with label pair~$(p,\cdot)$
  correspond to the same vertex~$u$, the number of vertices with color
  $\omega[p,q]$ is~$\#(v)$ if~$v$ has label~$q$ and~$n-\#(v)$ if~$v$
  has label~$p$. Hence, exactly~$n$ vertices with this color are
  deleted, as required by~$D$. Finally, we show that exactly~$n$
  vertices of each edge selection color~$\varepsilon[p,q]$,~$p<q$, are
  deleted: Let~$u$ and~$v$ be the vertices of~$S$ with label~$p$
  and~$q$, respectively. Then, the deleted path~$P_i(u,v)$
  contains~$\#(u)$ vertices with color~$\varepsilon[p,q]$ and the
  deleted path~$P_i(v,u)$ contains~$n-\#(u)$ vertices with this
  color. Altogether, the
  multiset of colors in~$G'$ is exactly~${\cal C}\setminus D=M$.

  ($\Leftarrow$) Let~$G'$ be a connected subgraph of~$G$ whose multiset
  of vertex colors is exactly~$M$. Let~$V_D:= V(G)\setminus V(G')$
  denote the set of deleted vertices, that is, vertices not
  in~$G'$. The color multiset of the vertex colors of~$V_D$ is
  exactly~$D$. Thus, exactly one vertex with color~$\iota^+$ and
  color~$\iota^-$ is deleted. Consequently exactly one path~$P_i$ is
  completely deleted from~$G'$: deleting~$\iota^+$ in some~$P_i$
  implies that the~$\iota^-$ in~$P_i$ is also deleted. Thus, no
  further vertices from any~$P_j$,~$j\neq i$, may be deleted. 
  
  Moreover, since each label selection color~$\lambda[p,q]^+$
  or~$\lambda[p,q]^-$ is contained exactly once in~$D$, the set~$V_D$
  also contains exactly one path~$P_j(u,v)$ where~$u$ has label~$p$
  and~$v$ has label~$q$. Moreover, we have~$j=i$ by the assignment of
  the instance selection colors: If~$j\neq i$, then there is
  some~$\tau\in [\log t]$ such that~$b_\tau(i)\neq b_\tau(j)$. Then,
  however~$\iota[p,q,\tau]$ is either not contained in the colors
  of~$V_D$ or it is contained twice in the colors of~$V_D$. In either case,
  the set of deleted colors is different from~$D$. 

  Thus, all the deleted paths in the edge selection gadgets correspond
  to the same instance~$i$. Now consider the paths for label
  pairs~$(p,\cdot)$. These label pairs correspond to the same vertex:
  Otherwise, there is some~$P_i(u,v)$ and some~$P_i(u',v')$ such
  that~$u\neq u'$,~$v$ has label~$q$, and~$v'$ has label~$q+1$. Then,
  however, the number of vertices with color~$\omega[p,q]$ does not
  equal~$n$ since~$P_i(u,v)$ contains~$\#(u)$ vertices of this
  color,~$P_i(u',v')$ contains~$n-\#(u')$ vertices of this color
  and~$\#(u)\neq \#(u')$. Hence, the deleted paths correspond to a
  vertex set~$S$ with~$k$ different labels in some~$H_i$. It remains
  to show that the graph~$H_i[S]$ is a clique.

  Consider an arbitrary pair of labels~$p$ and~$q$ where~$p<q$. Moreover,
  let~$u\in S$ and~$v\in S$ have label~$p$ and~$q$,
  respectively. Let~$P_i(u,v')$ be the path for~$u$ that is deleted
  with this color pair and let~$P_i(v,u')$ be the path for~$v$ that is
  deleted for this color pair. Then, $P_i(u,v')$~contains
  exactly~$\#(u)$ vertices with
  color~$\varepsilon[p,q]$,~$P_i(v,u')$~contains exactly~$n-\#(u')$
  vertices of this color. Since~$D$ contains exactly~$n$ vertices of
  this color, this implies~$u=u'$. By construction, this implies
  that~$u$ and~$v$ are neighbors in~$H_i$.
\end{proof}

\section{Colorful Graph Motif on Trees}
\label{sec:cgm}
For the combination of vertex-colored trees as input graphs and motifs that are sets, the
problem becomes considerably easier.  First, we show that in this case~\sCGM admits a
linear-vertex problem kernel that can be computed in linear time. The idea for the problem
kernelization is based on two simple observations.  First, we observe that the number of
vertices that are not unique is bounded in~\sCGM.
\begin{lemma}
  \label{lem:dual-bound}
  Let~$(M,G,\chi)$ be an instance of~\CGM. Then at most~$2\ell$
  vertices in~$G$ are not unique.
\end{lemma}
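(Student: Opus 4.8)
The plan is to count the vertices of each color class and argue that a color class contributes non-unique vertices only when it has size at least two, and that every such surplus vertex is ``charged'' to the dual budget~$\ell$. Concretely, I would let~$C_{\geq 2}\subseteq C$ denote the set of colors~$c$ that appear on at least two vertices of~$G$; the non-unique vertices are exactly those whose color lies in~$C_{\geq 2}$, so the quantity to bound is~$\sum_{c\in C_{\geq 2}}\occ(G,c)$, where I write~$\occ(G,c)$ for the number of vertices of color~$c$. First I would observe that, since~$M$ is a set in the~\sCGM{} setting, every color in~$M$ has multiplicity exactly one, so for each~$c\in C_{\geq 2}$ at least one vertex of color~$c$ must be left out of any occurrence; more to the point, even just to make the color-count consistent we need~$\occ(G,c) - 1$ of these vertices to be outside the solution whenever~$c\in M$, and all of them when~$c\notin M$.

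The key step is the global counting identity~$\ell = |V| - |M| = \sum_{c\in C}\bigl(\occ(G,c) - M(c)\bigr)$, where~$M(c)\in\{0,1\}$. Colors with~$\occ(G,c)=0$ contribute~$0$, colors with~$\occ(G,c)=1$ contribute~$\occ(G,c)-M(c)\geq 0$, and colors in~$C_{\geq 2}$ contribute~$\occ(G,c)-M(c)\geq \occ(G,c)-1 \geq \occ(G,c)/2$, the last inequality because~$\occ(G,c)\geq 2$. Summing only over~$C_{\geq 2}$ and dropping the (non-negative) contributions of the other color classes gives
\[
\ell \;\geq\; \sum_{c\in C_{\geq 2}}\bigl(\occ(G,c)-M(c)\bigr) \;\geq\; \sum_{c\in C_{\geq 2}}\frac{\occ(G,c)}{2} \;=\; \frac{1}{2}\cdot(\text{number of non-unique vertices}),
\]
which rearranges to the claimed bound of~$2\ell$.

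I do not expect a real obstacle here; this is a short double-counting argument. The one point that needs a moment of care is making sure no color is double-counted or missed: the unique vertices are precisely those in color classes of size exactly one, the non-unique vertices are precisely those in color classes of size~$\geq 2$, and every vertex lies in exactly one color class, so the partition of~$V$ by color lets us split the sum cleanly. A secondary subtlety is that we should note~$M(c)\le\occ(G,c)$ for every~$c$ (otherwise the instance is trivially a ``no''-instance and the statement is vacuous, or one first discards such colors), which guarantees every summand~$\occ(G,c)-M(c)$ is non-negative so that dropping terms only decreases the right-hand side. With those observations in place the inequality chain above is immediate.
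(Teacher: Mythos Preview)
Your proposal is correct and follows essentially the same counting argument as the paper: both bound the number of non-unique vertices by observing that each color~$c$ with~$\occ(G,c)\ge 2$ contributes at least~$\occ(G,c)-1\ge \occ(G,c)/2$ to~$\ell$. The paper phrases this as~$\ell = n^+ - c^+$ with~$n^+\ge 2c^+$, while you write it as an explicit sum over color classes; your version is in fact slightly more careful in handling colors that appear in~$G$ but not in~$M$.
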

\begin{proof}
  Let~$C^+$ denote the set of colors that occur more than once in~$G$ and
  let~$\occ(c)$ denote the number of occurrences of a color~$c$
  in~$G$. We denote $c^+:=|C^+|$, $n^+:=\sum_{c\in C^+}\occ(c)$, and
  $n^-$ the number of unique vertices in $G$. By definition, no color is
  repeated in~$M$, thus $|M|= c^++n^-$~; moreover, $|V|=n^++n^-$. Hence, the 
  number $\ell=|V|-|M|$ of vertices to delete satisfies $\ell=
  n^+-c^+$. By definition $n^+\geq 2c^+$, and thus we conclude that $\ell\geq n^+/2$.
\end{proof}
Second, if there are two vertices that are unique, then the uniquely
determined path between these vertices is contained in every
occurrence of the motif. The kernelization accordingly removes all the
vertices that lie on these paths. More precisely, these vertices are
``contracted'' into the root~$r$. Afterwards, in a second phase some
further vertices are removed because their colors have been used
during the contraction. Eventually, this results in an instance which
has at most one unique vertex and thus, by Lemma~\ref{lem:dual-bound},
bounded size. For an example of the kernelization,
see Figure~\ref{fig:kernel}. Below, we give a more detailed description.

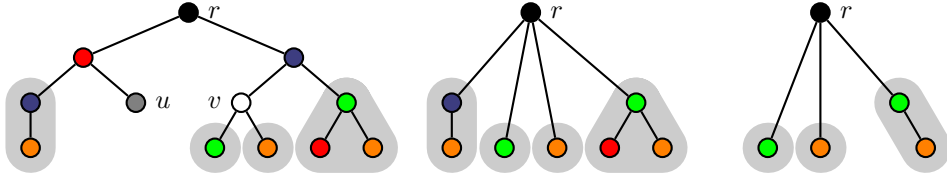
\begin{figure}[t]
  \centering
  \begin{tikzpicture}[xscale=1.4,yscale=0.6]

    \tikzstyle{vertex}=[circle,draw,thick,fill=black,minimum
    size=7pt,inner sep=1pt,font=\footnotesize]
    \tikzstyle{vertexb}=[circle,draw,thick,fill=dblue,minimum
    size=7pt,inner sep=1pt,font=\footnotesize]
    \tikzstyle{vertexr}=[circle,draw,thick,fill=red,minimum
    size=7pt,inner sep=1pt,font=\footnotesize]
    \tikzstyle{vertexg}=[circle,draw,thick,fill=green,minimum
    size=7pt,inner sep=1pt,font=\footnotesize]
    \tikzstyle{vertexgr}=[circle,draw,thick,fill=grey,minimum
    size=7pt,inner sep=1pt,font=\footnotesize]
    \tikzstyle{vertexw}=[circle,draw,thick,minimum
    size=7pt,inner sep=1pt,font=\footnotesize]
    \tikzstyle{vertexo}=[circle,draw,thick,fill=orange,minimum
    size=7pt,inner sep=1pt,font=\footnotesize]

    \tikzstyle{edge}=[-,thick]
    \tikzstyle{bedge} = [color=black,opacity=.2,line cap=round, line
      join=round, line width=19pt]
      
    \begin{scope}
      \node[vertex,label=right:$r$] (1) at (0,10) {}; \node[vertexr]
      (2) at (-1,9) {}; \node[vertexb] (3) at (1,9) {}; \node[vertexb]
      (4) at (-1.5,8) {}; \node[vertexgr,label=right:$u$] (5) at (-0.5,8) {};
      \node[vertexw,label=left:$v$] (6) at (0.5,8) {}; \node[vertexg] (7) at (1.5,8) {}; \node[vertexg] (8) at (0.25,7) {}; \node[vertexo] (9) at
      (0.75,7) {}; \node[vertexr] (10) at (1.25,7) {}; \node[vertexo]
      (11) at (1.75,7) {};
      \node[vertexo] (12) at (-1.5,7) {};

      \draw[edge] (2)--(1)--(3); \draw[edge] (12)--(4)--(2)--(5);
      \draw[edge] (6)--(3)--(7); \draw[edge] (8)--(6)--(9);
      \draw[edge] (10)--(7)--(11);
      \begin{pgfonlayer}{background}
         \draw[bedge]
        (4.center)--(12.center);
         \draw[bedge]
        (8.center)--(8.center);
         \draw[bedge]
        (9.center)--(9.center);
        \draw[bedge]
        (7.center)--(10.center)--(11.center)--(7.center);
      \end{pgfonlayer}
    \end{scope}
    \begin{scope}[xshift=3cm]
            \node[vertex,label=right:$r$] (1) at (0.25,10) {};  \node[vertexb]
      (4) at (-0.5,8) {}; 
       \node[vertexg] (7) at (1.25,8)
      {}; \node[vertexg] (8) at (0,7) {}; \node[vertexo] (9) at
      (0.5,7) {}; \node[vertexr] (10) at (1,7) {}; \node[vertexo]
      (11) at (1.5,7) {};
      \node[vertexo] (12) at (-0.5,7) {};

      \draw[edge] (1)--(7); \draw[edge] (12)--(4)--(1);
      \draw[edge] (8)--(1)--(9);
      \draw[edge] (10)--(7)--(11);
      \begin{pgfonlayer}{background}
         \draw[bedge]
        (4.center)--(12.center);
         \draw[bedge]
        (8.center)--(8.center);
         \draw[bedge]
        (9.center)--(9.center);
        \draw[bedge]
        (7.center)--(10.center)--(11.center)--(7.center);
      \end{pgfonlayer}
    \end{scope}
    \begin{scope}[xshift=5.5cm]
            \node[vertex,label=right:$r$] (1) at (0.5,10) {};  
       \node[vertexg] (7) at (1.25,8)
      {}; \node[vertexg] (8) at (0,7) {}; \node[vertexo] (9) at
      (0.5,7) {}; \node[vertexo]
      (11) at (1.5,7) {};

      \draw[edge] (1)--(7); 
      \draw[edge] (8)--(1)--(9);
      \draw[edge] (7)--(11);
      \begin{pgfonlayer}{background}
         \draw[bedge]
        (8.center)--(8.center);
         \draw[bedge]
        (9.center)--(9.center);
        \draw[bedge]
        (11.center)--(7.center);
      \end{pgfonlayer}
    \end{scope}
  \end{tikzpicture}

  \caption{The two phases of the kernelization. Left: the input
    instance, where~$r$,~$u$,~and~$v$ have unique colors; the pendant
    non-unique subtrees are highlighted by the grey %black
    background. Middle: after Phase~I, all vertices on paths between
    unique vertices are contracted into~$r$. Right: in Phase~II, all
    vertices with a color that was removed in Phase~I are removed
    together with their descendants.}
\label{fig:kernel}
\end{figure}
\begin{theorem}
  \label{thm:cgm-tree-kernel}
  \CGM on trees admits a problem kernel with at most~$2\ell+1$
  vertices that can be computed in~$\Oh(n)$ time.
\end{theorem}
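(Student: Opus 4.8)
The plan rests on two observations. First, by Lemma~\ref{lem:dual-bound} the number of non-unique vertices of an \sCGM{} instance is at most~$2\ell$; hence it suffices to transform~$(M,G,\chi)$ in linear time into an equivalent \sCGM{} instance that is still a tree, whose parameter does not exceed~$\ell$, and which has at most one unique vertex, because Lemma~\ref{lem:dual-bound} then bounds its vertex count by~$2\ell+1$. Second, the reduction engine is contraction of \emph{forced} vertices, i.e., vertices contained in every occurrence: if~$F$ is a set of forced vertices, then, since every occurrence~$S$ induces a connected subtree of~$T$ containing~$F$, every vertex on a path between two vertices of~$F$ is forced as well, so the forced vertices form a subtree of~$T$ that can be merged into a single vertex. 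Two kinds of forced vertices are easy to detect: a unique vertex whose color lies in~$M$, and—once a forced root has been fixed—every vertex on the path from that root to such a unique vertex.

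\emph{Preprocessing and Phase~I.} We may assume that every color of~$M$ occurs in~$G$ (otherwise the instance is trivially a ``no''-instance) and that~$G$ has a unique vertex whose color lies in~$M$ (if~$G$ has no unique vertex at all, then~$n\le 2\ell$ by Lemma~\ref{lem:dual-bound} and we are done). Root~$T$ at such a vertex~$r$; then~$r$ is forced. Let~$U$ be the set of unique vertices and let~$T_U$ be the inclusion-minimal subtree of~$T$ containing~$U$; every vertex of~$T_U$ is forced. If two vertices of~$T_U$ share a color, report ``no'' (no occurrence can be colorful). Otherwise contract~$V(T_U)$ into one vertex~$r^\ast$, give~$r^\ast$ a fresh color~$c^\ast$, replace~$M$ by~$(M\setminus \chi(V(T_U)))\cup\{c^\ast\}$, and re-root at~$r^\ast$. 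Contracting a subtree of a tree yields a tree, and mapping an occurrence~$S$ of the old instance (which necessarily contains~$V(T_U)$) to~$(S\setminus V(T_U))\cup\{r^\ast\}$ is a bijection onto the occurrences of the new instance that preserves connectivity and the color multiset; so the two instances are equivalent. This step does not increase~$\ell$, because~$|V|$ drops by~$|V(T_U)|-1$ while~$|M|$ drops by only~$|\chi(V(T_U))|-1 \le |V(T_U)|-1$.

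\emph{Phase~II and iteration.} After Phase~I some vertex~$v\neq r^\ast$ may have~$\chi(v)\notin M$ because its color was absorbed by~$T_U$; such a~$v$ lies in no occurrence, and, since occurrences contain the forced root~$r^\ast$ and are connected, neither does any descendant of~$v$, so we delete~$v$ together with the whole subtree rooted at~$v$ for every such~$v$. If afterwards some color of~$M$ is absent from~$G$, report ``no''. This step removes vertices without changing~$M$, so again~$\ell$ does not increase. Now the key point: Phase~II can leave a color that still belongs to~$M$ with exactly one remaining occurrence, which makes that vertex forced; hence Phases~I and~II performed once need not reduce the instance to at most one unique vertex. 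We therefore iterate Phases~I and~II until neither changes the instance. Every iteration that does anything strictly decreases~$n$ (a contraction merges at least two vertices of~$U$ into one, a deletion removes at least one vertex), so the process halts after at most~$n$ rounds, and at the halting point~$|U|\le 1$. Together with Lemma~\ref{lem:dual-bound} applied to the final instance—whose parameter is at most~$\ell$ since no step increases it—this yields the claimed bound of at most~$2\ell+1$ vertices; the final graph is a tree.

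\emph{Running time.} A round-by-round implementation costs~$\Oh(n^2)$; to reach~$\Oh(n)$ one runs Phases~I and~II as a single worklist-driven sweep. Maintain the rooted tree with parent pointers and an array of color frequencies. Whenever a frequency drops to~$1$ for a color still in~$M$, immediately contract the (now forced) path from~$r^\ast$ to that vertex into~$r^\ast$, pushing the colors that thereby leave~$M$ or drop to frequency~$1$ onto the worklist; whenever a color leaves~$M$, delete the subtrees rooted at its vertices. Each vertex is contracted or deleted exactly once, and the successive path-contractions telescope, so the total work is~$\Oh(n)$. I expect the two genuine obstacles to be exactly these: proving that a single ``contract-then-clean'' pass does not suffice and that the contractions must be cascaded—this is what makes the ``at most one unique vertex'' claim (and hence, via Lemma~\ref{lem:dual-bound}, the size bound) go through—and organising this cascade so that it runs in amortised linear time.
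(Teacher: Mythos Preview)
Your approach is correct but unnecessarily elaborate, and the paper's proof shows why. The two proofs share the same engine---contract the forced subtree spanned by the unique vertices into a single root and prune the vertices whose colors were absorbed---but they differ in how Lemma~\ref{lem:dual-bound} is invoked. You apply it to the \emph{final} instance and therefore need that instance to have at most one unique vertex, which forces you to iterate the contraction--deletion cycle. The paper instead applies Lemma~\ref{lem:dual-bound} once, to the \emph{original} instance. After a single contraction, every vertex other than the new root lies in one of the maximal pendant non-unique subtrees of the original tree; these vertices were already non-unique before any modification, so by Lemma~\ref{lem:dual-bound} there are at most~$2\ell$ of them, irrespective of whether some have since become unique in the modified instance. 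One pass therefore already yields the~$2\ell+1$ bound, and the linear running time follows from a single depth-first traversal rather than an amortised worklist argument.

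So neither of your two ``genuine obstacles'' is inherent: the cascade is unnecessary for the size bound, and with it the need for the amortised analysis disappears. Your iterative version remains correct---it only shrinks the instance further and never increases~$\ell$---but the assertion that a single contract-then-clean pass ``does not suffice'' is wrong; it merely does not suffice for \emph{your} particular way of applying the lemma.
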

\begin{proof}
  We first describe the kernelization algorithm, then we show its
  correctness and finally bound its running time. By
  Lemma~\ref{lem:dual-bound}, the size bound holds if the instance has no
  unique vertex. Thus, we assume that there is a unique vertex in the
  following.

  Given an instance~$(M,G,\chi)$ of \sCGM, first root the input
  tree~$G$ at an arbitrary unique vertex~$r$. Now call a subtree with
  root~$v$ \emph{pendant} if it contains all descendants of~$v$
  in~$G$. Then, compute in a bottom-up fashion maximal pendant
  subtrees such that no vertex in this subtree is unique. Call these
  subtrees the \emph{pendant non-unique subtrees}. By
  Lemma~\ref{lem:dual-bound}, the total number of vertices in pendant
  non-unique subtrees is at most~$2\ell$. Now the algorithm removes
  vertices in two phases.

  \emph{Phase~I.} Remove from~$G$ all vertices except~$r$ that are not
  contained in a pendant non-unique subtree. Remove all colors of
  removed vertices from~$M$. If there is a color~$c$ such that two
  vertices with color~$c$ are removed in this step, then return
  ``no''. Make~$r$ adjacent to the root of each pendant non-unique
  subtree.

  \emph{Phase~II.} In the first step of this phase, for each color~$c$
  where at least one vertex has been removed in Phase~I, remove all
  vertices from~$G$ that have color~$c$. In the second step of this
  phase, remove all descendants of these vertices. Finally, let~$M'$
  denote the set of colors that are contained in the remaining
  instance.
  This completes the kernelization algorithm;
  the resulting instance has at most~$2\ell+1$ vertices since all
  vertices except~$r$ are unique.
  To show correctness, we first observe the following. 

  \emph{Claim: every occurrence of~$M$ in~$G$ contains no vertex~$v$
    that is removed during Phase~II of the kernelization.} This can be
  seen as follows. First, every occurrence of~$M$ in~$G$ contains all
  vertices removed during Phase~I: these vertices are either unique or
  lie on the uniquely determined path between two unique vertices. Now
  consider a vertex~$v$ removed during Phase~II. If~$v$ is removed in
  the first step of Phase~II, then $v$ has the same color~$c$ as a
  vertex~$u$ removed during Phase~I. Consequently, $v$ is not contained
  in an occurrence of~$M$: By the above, the occurrence contains~$u$
  and it contains no other vertex with color~$c$. Otherwise,~$v$ is
  removed in the second step of Phase~II, because~$v$ is not connected
  to~$r$. Since every occurrence of~$M$ contains~$r$, it thus cannot
  contain~$v$.

  We now show the correctness of the kernelization, that is, the
  equivalence of the original instance~$(M,G,\chi)$ and the resulting
  instance~$(M',G',\chi')$.
  First, assume that~$(M,G,\chi)$ is a yes-instance. Let~$S_T$ be an
  occurrence of~$M$ in~$G$, and let $T$ denote $G[S_T]$; by the above
  claim, $T$~contains only vertices that are removed during Phase~I or
  that are contained in~$G'$.  Consider the subtree~$T'$ of~$G$ that
  contains all vertices of~$T$ that are not removed during the
  kernelization. We show that~$T'$ is connected in~$G'$ and contains
  all colors of~$M'$. Connectivity can be seen as follows. First,
  observe that~$T$ and~$T'$ contain~$r$. Second, any vertex~$v\neq r$
  of~$T'$ is contained in some pendant non-unique subtree
  of~$G$. Thus,~$v$ is in~$T$ connected to~$r$ via a path that first
  visits only vertices of~$T'$, including the root of the pendant
  non-unique subtree. The root of the pendant non-unique subtree is
  in~$G'$ adjacent to~$r$. Thus, each vertex~$v\neq r$ has in~$T'$ a
  path to~$r$ which implies that~$T'$ is connected. It remains to
  prove that~$T'$ contains all colors of~$M'$. Consider a color~$c\in
  M'$. Since~$c\in M'$, none of the vertices with color~$c$ are
  removed in Phase~I of the kernelization. Moreover, since no vertex
  of~$T$ is removed in Phase~II of the kernelization, we have that the
  vertex of~$T$ with color~$c$ is contained in~$T'$. Thus,~$T'$
  contains each color of~$M'$. Finally,~$T'$ contains each color at
  most once since~$T$ does.
  
  Now assume that~$(M',G',\chi')$ is a yes-instance and let~$S_{T'}$
  be an occurrence of~$M'$ in~$G'$. Let~$T$ denote $G[S_{T'}\cup
  V_I]$, where $V_I$ is the set of vertices removed during Phase~I of
  the kernelization. We show that~$T$ is connected and contains every
  color of~$G$ exactly once. To see that~$T$ is connected observe the
  following: Clearly,~$G[\{r\}\cup V_I]$ is connected. Moreover, each
  vertex~$v\neq r$ of~$T'$ has in~$T'$ a path to~$r$. This path
  contains a subpath from~$v$ to the root~$r'$ of the pendant
  non-unique subtree containing~$v$. In~$G$,~$r'$ is adjacent to some
  vertex of~$\{r\}\cup V_I$. Therefore,~$r'$ is connected to~$r$
  in~$T$ and thus~$T$ is connected. It remains to show that~$T$ 
  contains every color of~$G$ exactly once. Clearly,~$T'$ contains at
  least one vertex of each color~$c\in M'$. Moreover, it also contains
  at least one vertex of each color~$c\in M\setminus M'$ since it
  contains all vertices of~$V_I$. Besides, it contains each color only
  once: The vertices of~$T'$ have pairwise different colors and
  different colors than those of the vertices of~$V_I$. Finally, the vertices
  of~$V_I$ have different pairwise colors since the kernelization did
  not return ``no''.

  The running time can be seen as follows. Determining the pendant
  non-unique subtrees can be done by a standard bottom-up procedure in
  linear time. Removing all vertices during Phase~I can also be
  achieved in linear time. After removing a vertex with color~$c$ in
  Phase~I, we label~$c$ as \emph{occupied}. When we remove a vertex
  with an occupied color during Phase~I, we immediately return
  ``no''. After the removal of vertices during Phase~I, we can
  construct~$M'$ from~$M$ in linear time by removing each occupied
  color. Finally, we can in linear time add an edge between~$r$ and
  every root of a pendant non-unique subtree and then remove all
  remaining vertices that have an occupied color. The final graph~$G'$
  is obtained by performing a depth-first search from~$r$, in order to
  include only those vertices still reachable from~$r$.
  % The kernelization first exhaustively applies
  % Rule~\ref{rrule:unique-path}. Afterwards, the instance has at most
  % one unique vertex and thus by Observation~\ref{obs:dual-bound} it has
  % at most~$2\ell+1$ vertices. Each application
  % of~Rule~\ref{rrule:unique-path} can be performed in~$\Oh(n)$ time and
  % the rule is applied at most~$n$ times.  
\end{proof}

%\subsection{An Efficient Search Tree Algorithm}
Now, let us turn to developing fast(er) FPT algorithms for~\sCGM.
It can be seen that it is possible to solve~\sCGM in
trees in time~$1.62^\ell\cdot n^{\Oh(1)}$, by 'branching on colors with the most
occurrences' until every color appears at most twice. More precisely,
for a color~$c$ that appears at 
least three times and some vertex~$v$ with color~$c$, we can branch
into the two cases to either delete~$v$ or to delete the at least two
other vertices that have color~$c$. The branching
vector\footnote{For an introduction to the analysis of branching vectors, refer to~\cite{CFK+15,FK10}.} for this
branching rule is~$(1,2)$ or better. Now, if every color appears at most
twice, then \sCGM on trees can be solved in polynomial
time~\cite[Lemma~2]{FFHV11}. 
By a different branching approach, the above running time can be
further improved.  

\begin{brule}
\label{brule:2-subtrees}
If there is a color~$c$ such that there are two vertices~$u$ and~$v$
with color~$c$ that are both not leaves of the tree~$G$, then branch
into the case to delete from~$G$ either
\begin{itemize}
\item the maximal subtree containing~$u$ and all vertices~$w$ such
  that the path from $v$ to $w$ contains~$u$, or
\item the maximal subtree containing~$v$ and all vertices~$w$ such
  that the path from $u$ to $w$ contains~$v$.
\end{itemize}
\end{brule}
\begin{proof}[Proof of correctness.]
  No occurrence may contain vertices of both subtrees, since in this
  case it contains~$u$ and~$v$ which have the same color.
\end{proof}
% In both cases of the rule at least two vertices are
% deleted since neither~$u$ nor~$v$ is a leaf.
If the rule does not
apply, then one can solve the problem in linear time; here,
let~$\occ(c)$ denote the number of occurrences of a color~$c$ in~$G$.
\begin{lemma}
  \label{lem:only-leaves}
  Let~$(M,G,\chi)$ be an instance of~\CGM such that~$G$ is a tree and
  for each color~$c$ with~$\occ(c)>1$ at least~$\occ(c)-1$ occurrences
  of~$c$ are leaves of~$G$, then~$(M,G,\chi)$ can be solved in $\Oh(n)$~time.
\end{lemma}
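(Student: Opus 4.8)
The plan is to reformulate the hypothesis structurally: it is equivalent to saying that the coloring $\chi$ is injective on the set $I$ of non-leaf vertices of $G$. Indeed, if two internal vertices shared a color $c$, then $c$ would have at least two non-leaf occurrences, so fewer than $\occ(c)-1$ of its occurrences would be leaves. After treating the degenerate cases $|M|\le 1$ and $|V(G)|\le 2$ directly (each solvable by brute force in $\Oh(n)$ time), we may assume $|M|\ge 2$ and $|V(G)|\ge 3$; then no two leaves of $G$ are adjacent, so every occurrence $S$ of $M$ has $|S|=|M|\ge 2$, contains at least one internal vertex, the set $S\cap I$ induces a subtree of $G[I]$, and every leaf of $G$ lying in $S$ is adjacent to a vertex of $S\cap I$.

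Let $Z:=\{x\in I:\chi(x)\in M\}$; then $S\cap I\subseteq Z$. Writing $Z_1,\dots,Z_p$ for the vertex sets of the connected components of $G[Z]$, the set $S\cap I$, being connected in $G[I]$ and contained in $Z$, lies inside exactly one $Z_j$. The crucial step is an exchange argument: starting from any occurrence $S$ with $S\cap I\subseteq Z_j$, while $S\cap I\ne Z_j$ we may choose a vertex $x\in Z_j\setminus S$ adjacent to $S\cap I$; since $\chi(x)\in M$, some vertex of $S$ has color $\chi(x)$, and by injectivity of $\chi$ on $Z_j$ this vertex is not in $S\cap I$, hence is a leaf $\lambda$; replacing $\lambda$ by $x$ gives another occurrence (connectivity is kept because we delete a leaf of $G[S]$ and add a neighbor of $S\cap I$, and colorfulness is kept because we only substitute a vertex by one of the same color) whose internal part is strictly larger. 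Iterating, we obtain an occurrence of the form $Z_j\cup F$, where $F$ consists of leaves attached to $Z_j$ with pairwise distinct colors avoiding $\chi(Z_j)$ and $\chi(Z_j)\cup\chi(F)=M$; conversely, any such set is an occurrence. Hence $(M,G,\chi)$ is a yes-instance if and only if there is a $j$ such that every color of $M\setminus\chi(Z_j)$ occurs on some leaf adjacent to $Z_j$, equivalently such that $M\subseteq\chi(Z_j)\cup\{\chi(\lambda):\lambda\text{ a leaf adjacent to }Z_j\}$.

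It remains to implement this test in linear time. I would compute $I$, the components $Z_1,\dots,Z_p$, and a component label for every vertex in a single traversal. Each internal vertex contributes its color to at most one component (the one it belongs to, if any) and each leaf contributes its color to at most one component (that of its unique neighbor, if that neighbor lies in $Z$), so there are only $\Oh(n)$ pairs ``(color, component)'' to consider. Scanning these pairs while maintaining, for every color of $M$, the index of the last component that has seen it, and, for every component, a counter of the number of distinct colors of $M$ it has seen, lets us detect in $\Oh(n)$ total time whether some $Z_j$ sees all of $M$; if so we output $Z_j$ together with one leaf per missing color, and otherwise we answer ``no''.

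The main obstacle is the exchange argument, i.e.\ verifying that one can always enlarge $S\cap I$ toward the full component $Z_j$ while preserving a colorful connected occurrence; the injectivity of $\chi$ on $I$ (the reformulated hypothesis) is exactly what makes the color-preserving leaf-for-internal-vertex substitution legal. A secondary obstacle is algorithmic: to obtain $\Oh(n)$ rather than $\Oh(p\cdot|M|)$ time for the coverage check, the work must be charged to the vertices of $G$, which is what the ``(color, component)'' contribution scheme above does.
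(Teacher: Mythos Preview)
Your argument is correct, but it is far more elaborate than what the paper does. The paper's proof is essentially three sentences: for each color~$c$ with $\occ(c)>1$, delete $\occ(c)-1$ leaf occurrences of~$c$ (which exist by hypothesis); since deleting leaves from a tree keeps it connected, the remaining tree is connected and contains each color exactly once, hence is an occurrence of~$M$. No exchange argument, no components, no coverage test---just a direct greedy construction.

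Your route is genuinely different. You recast the hypothesis as injectivity of~$\chi$ on the internal vertices~$I$, carve out the components~$Z_j$ of $G[\{x\in I:\chi(x)\in M\}]$, and use an exchange argument to normalize any occurrence so that its internal part coincides with some~$Z_j$. This is sound, and it buys one thing the paper's argument does not: it works verbatim even when~$G$ contains colors outside~$M$. The paper implicitly works under the convention that~$M$ is the full color set of~$G$ (this is the assumption behind Lemma~\ref{lem:dual-bound} and is harmless after the obvious preprocessing). Under that convention your set~$Z$ is all of~$I$, $G[Z]=G[I]$ is a single connected subtree, and your coverage condition $M\subseteq \chi(I)\cup\chi(\text{leaves})$ is automatically satisfied---so your machinery collapses to ``always yes,'' which is exactly what the paper's greedy deletion shows directly.

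One small implementation remark: your ``for every color, remember the last component that has seen it'' trick only counts distinct colors per component correctly if the (color, component) pairs are scanned grouped by component; otherwise a color contributed to component~$j$, then to~$j'$, then again to~$j$ would be double-counted for~$j$. Processing the components one after another (the natural order in a DFS) fixes this, but you should say so.
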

\begin{proof}
  For each color~$c$ with~$\occ(c)>1$, the algorithm simply
  deletes~$\occ(c)-1$ leaves with color~$c$. This can be done in
  linear time by visiting all leaves via depth-first search, checking
  for each leaf in~$\Oh(1)$ time whether~$\occ(c)>1$ and deleting the
  leaf in~$\Oh(1)$ time if this is the case. The resulting graph
  contains each color exactly once, and it is connected since a tree
  cannot be made disconnected by deleting leaves.
\end{proof}
Altogether, we arrive at the following running time.
\begin{theorem}
  \CGM can be solved in~$\Oh(\sqrt{2}^{\halfskip\ell} + n)$ time if~$G$ is a
tree.\label{thm:cgm-tree}
\end{theorem}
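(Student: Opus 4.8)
The plan is to assemble a bounded search tree algorithm from the three ingredients already prepared above: the linear-size kernel of Theorem~\ref{thm:cgm-tree-kernel}, Branching Rule~\ref{brule:2-subtrees}, and Lemma~\ref{lem:only-leaves} for the base case. First I would apply the kernelization: in $\Oh(n)$ time it either resolves the instance outright or returns an equivalent one whose graph has at most~$2\ell+1$ vertices and whose dual parameter is at most~$\ell$. From then on all computation takes place on an instance of size~$\Oh(\ell)$.

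On the kernel I would run the obvious recursion: as long as Branching Rule~\ref{brule:2-subtrees} applies, apply it and recurse into the two created instances; once it no longer applies, every color~$c$ has at most one non-leaf occurrence in~$G$, so at least~$\occ(c)-1$ occurrences of~$c$ are leaves, and the instance can be decided directly by Lemma~\ref{lem:only-leaves}. Correctness is then immediate: the kernelization is correct by Theorem~\ref{thm:cgm-tree-kernel}, Branching Rule~\ref{brule:2-subtrees} is correct as shown above, and the stopping condition of the recursion is exactly the hypothesis of Lemma~\ref{lem:only-leaves}.

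The running time hinges on the branching vector. Because the vertices~$u$ and~$v$ selected by Branching Rule~\ref{brule:2-subtrees} are not leaves of the tree, each of them has a further neighbor on its own side of the edge pointing towards the other vertex; hence each of the two subtrees the rule may delete contains at least two vertices, namely~$u$ (resp.~$v$) and one more. Deleting vertices does not change~$M$, so the dual parameter decreases by at least~$2$ in each branch, the branching vector is~$(2,2)$, its branching number is~$\sqrt{2}$, and the search tree has~$\Oh(\sqrt{2}^{\ell})$ nodes. To get the claimed bound I would then argue that the work at a node — finding a color with two non-leaf occurrences, determining and removing the two candidate subtrees, or calling Lemma~\ref{lem:only-leaves} — can be charged to the vertices it deletes, after one linear-time preprocessing pass over the kernel that makes each relevant pendant subtree locatable in time proportional to its own size; the usual amortized accounting for $(2,2)$-branching then keeps the total over the whole tree at~$\Oh(\sqrt{2}^{\ell})$, and adding the~$\Oh(n)$ for kernelization yields the theorem.

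I expect this last point — the bookkeeping that prevents a spurious polynomial-in-$\ell$ factor in front of~$\sqrt{2}^{\ell}$, in particular handling the per-leaf cost of Lemma~\ref{lem:only-leaves} and the cost of identifying the subtrees deleted by the branching rule — to be the only genuinely delicate part; everything else is a straightforward combination of results already established in this section.
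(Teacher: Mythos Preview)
Your approach matches the paper's exactly: kernelize via Theorem~\ref{thm:cgm-tree-kernel}, apply Branching Rule~\ref{brule:2-subtrees} with branching vector~$(2,2)$, and finish with Lemma~\ref{lem:only-leaves}. The only divergence is in how the polynomial-in-$\ell$ overhead is removed.

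The paper does not attempt an amortized charging argument; it simply observes that each search-tree node costs~$\Oh(\ell)$, giving~$\Oh(\sqrt{2}^{\halfskip\ell}\cdot\ell+n)$, and then removes the factor~$\ell$ by \emph{interleaving} the kernelization of Theorem~\ref{thm:cgm-tree-kernel} at every search-tree node, citing the standard technique of Niedermeier and Rossmanith~\cite{NR00}. Your proposed alternative, charging the per-node work to the deleted vertices, is not obviously sufficient as stated: at a leaf of the search tree the call to Lemma~\ref{lem:only-leaves} costs time proportional to the \emph{remaining} instance size, which can be~$\Theta(\ell)$ even when the current dual parameter is small (the instance was kernelized only once, at the root). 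Summed over~$\Oh(\sqrt{2}^{\halfskip\ell})$ leaves this reinstates the~$\ell$ factor. The fix is precisely to re-kernelize before the leaf call, at which point you have reinvented the interleaving technique; so the cleanest route is to cite~\cite{NR00} directly, as the paper does.
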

\begin{proof}
  The algorithm is as follows. First, reduce the input instance
  in~$\Oh(n)$ time to an equivalent one with~$\Oh(\ell)$ vertices using
  the kernelization of Theorem~\ref{thm:cgm-tree-kernel}. Now, apply
  Branching~Rule~\ref{brule:2-subtrees}. If this rule is no longer
  applicable, then solve the instance in~$\Oh(\ell)$ time (by applying
  the algorithm behind Lemma~\ref{lem:only-leaves}). Since the graph
  has~$\Oh(\ell)$ vertices, applicability of
  Branching~Rule~\ref{brule:2-subtrees} can be tested in~$\Oh(\ell)$
  time. Thus, the overall running time is~$\Oh(\ell)$ times the number
  of search tree nodes. Since each application of
  Branching~Rule~\ref{brule:2-subtrees} creates two branches and
  reduces~$\ell$ by at least two in each branch, the search tree has
  size~$\Oh(2^{\ell/2})=\Oh(\sqrt{2}^{\halfskip\ell})$. The resulting running time
  is~$\Oh(\sqrt{2}^{\halfskip\ell}\cdot \ell+n)$. Furthermore, the factor of~$\ell$ in the
  running time can be removed by interleaving search tree and
  kernelization~\cite{NR00}, that is, by applying the kernelization
  algorithm of Theorem~\ref{thm:cgm-tree-kernel} in each search tree node.
\end{proof}

 \section{Conclusion}
 \label{sec:conclusion}
 In this paper, we have studied the \GM{}, \LGM{} and \CGM{} problems,
 and in particular their behavior in terms of parameterized complexity,
 when the parameter is $\ell=|V|-|M|$, i.e. the number of vertices of $G$ that
 are not kept in a solution.

 We left open the parameterized complexity for parameter~$\ell$ for~\LGM{} on trees, even when
 the vertex-color graph is a forest.
 
 As mentioned in the introduction, parameterization by~$\ell$ may be interesting not only from
 a theoretic, but also from an applied point of view. Unfortunately, for the practically
 relevant case of~\LGM{} we have obtained W[1]-hardness even for very restricted color
 lists~$\L$. Moreover, as noted by Fertin et al.~\cite{FFJ17}, a reduction of Rauf et
 al.~\cite{RRNB13} shows that the variant of~\CGM{} where~$G$ is directed and has edge weights
 is W[1]-hard with respect to~$\ell$. However, the combination of~$\ell$ with further structure
 related to the colors of~$C$ led to tractability results~\cite{FFJ17,FFK18}. It would be
 interesting to identify such color-related structure also for~\LGM{}.
 
\bibliographystyle{plain}
\bibliography{gram-dual}
\end{document}